\documentclass[a4paper,twocolumn,11pt,accepted=2023-03-01]{quantumarticle}
\pdfoutput=1

\usepackage[numbers,sort&compress]{natbib} 

\usepackage{subfig}
\usepackage{hyperref}
\usepackage{diagbox}
\usepackage{multirow}
\usepackage{float}
\usepackage{qcircuit}
\usepackage{color}
\usepackage{amsmath}
\usepackage{amssymb}
\usepackage{array}
\usepackage{amsthm}
\usepackage{graphicx}
\usepackage{dcolumn}
\usepackage{bm}
\usepackage{soul}
\usepackage[font=small]{caption}
\usepackage{booktabs}
\usepackage{xr}


\newtheorem{theorem}{Theorem}[section]
\newtheorem*{remark*}{Remark}
\newtheorem*{remarks*}{Remarks}
\newtheorem{definition}{Definition}[section]

\newtheorem{lemma}{Lemma}[section]
\newtheorem*{theorem*}{Theorem}
\newtheorem*{lemma*}{Lemma}

\newcommand\bra[1]{\langle{#1}|}
\newcommand\ket[1]{|{#1}\rangle}

\begin{document}
\setlength{\abovedisplayskip}{3pt}
\setlength{\belowdisplayskip}{3pt}

\title{High Dimensional Quantum Machine Learning With Small Quantum Computers
}

\author{S. C. Marshall}
\email{s.c.marshall@liacs.leidenuniv.nl}
\author{C. Gyurik}
\author{V. Dunjko}
\affiliation{%
 Leiden University, Leiden, The Netherlands
}%

\begin{abstract}
Quantum computers hold great promise to enhance machine learning, but their current qubit counts restrict the realisation of this promise. 
To deal with this limitation the community has produced a set of techniques for evaluating large quantum circuits on smaller quantum devices. These techniques work by evaluating many smaller circuits on the smaller machine, that are then combined in a polynomial to replicate the output of the larger machine. This scheme requires more circuit evaluations than are practical for general circuits. However, we investigate the possibility that for certain applications many of these subcircuits are superfluous, and that a much smaller sum is sufficient to estimate the full circuit. We construct a machine learning model that may be capable of approximating the outputs of the larger circuit with much fewer circuit evaluations. We successfully apply our model to the task of digit recognition, using simulated quantum computers much smaller than the data dimension. The model is also applied to the task of approximating a random 10 qubit PQC with simulated access to a 5 qubit computer, even with only relatively modest number of circuits our model provides an accurate approximation of the 10 qubit PQCs output, superior to a neural network attempt. The developed method might be useful for implementing quantum models for larger data throughout the NISQ era.
\end{abstract}

\maketitle

\section{Introduction}

Quantum machine learning is often listed as one of the most promising applications of a near term quantum computer \cite{preskill2018}, with important early successes in a range of problems, from classification \cite{havlivcek2019supervised, schuld2019quantum} to generative modelling \cite{liu2018differentiable}. However the broader roll out of these methods to real world problems is tempered, in part, by the limited size of quantum computers. Among other limitations, current quantum computers lack enough qubits to run large circuits. Some ``circuit partitioning schemes" \cite{bravyi2016, peng2020, mitarai2021constructing} have been proposed to simulate larger circuits on smaller devices by partitioning the full circuit into a set of smaller circuits (see figure \ref{fig:cut up}). However, the exponential number of circuits needed by these schemes is completely intractable for most applications, with billions of sub-circuit evaluations required for even modest quantum machine learning instances.

In this work we examine the necessity of each subcircuit in producing an approximation of some partitioned circuit, presenting reasoning that a smaller amount of circuits could be sufficient in some cases. We then use this as inspiration for a new machine learning technique, which reconciles the need for larger circuit instances with affordable runtimes. Our new technique takes the same form as a given generic machine learning architecture that has been partitioned using the aforementioned techniques but with vastly fewer terms. 

We develop the basic theory behind this technique in Section \ref{Develop Model}, consider its generalisation error in Section \ref{gen error section} and test it experimentally in Section \ref{experiment section} on an instances of handwritten digit recognition using a 64 qubit ansatz with access to only a simulated 8 qubit computer (without use of excessive dimensionality reduction, such as dimensional principal component analysis). We also include an experiment testing the model's ability to replicate the output of larger unpartitioned circuits. Error analysis and the specifics of an evaluation and training schemes are presented in Section \ref{Implementation}.

    \begin{figure}
    \centering
    \includegraphics[width=8cm]{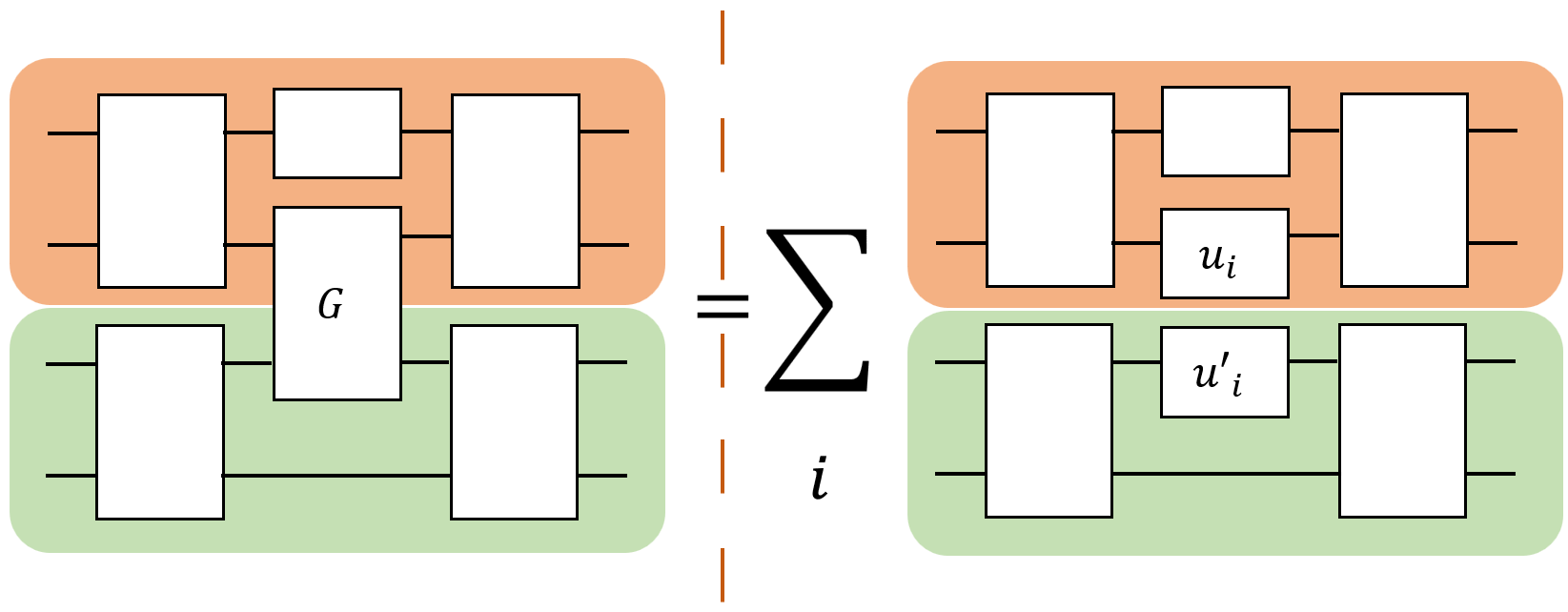}
    \caption{The fundamental notion of circuit partitioning. A potential partition (peach coloured and mint coloured) exists but is joined by a 2-qubit gate, $G$. By expressing $G$ as a sum of single qubit unitaries $u_i$ and $u_i'$ we can simulate the ouput of the large circuit by only running circuits on either element of the partition (which requires a smaller quantum computer).}
    \label{fig:cut up}
    \end{figure}
    
\section{Related work}\label{Rel Work}

We are not the first to consider how the  partitioning schemes \cite{bravyi2016, peng2020, mitarai2021constructing} could be made more efficient, whereas other research lines have focused on minimising the computational cost of applying the \textit{exact} partitioning schemes (e.g. by minimising the number of gates cut), we focus on shrinking the number of subcircuits to \textit{approximate} the output. As such many of the techniques in this section can be composed with our method to create an even more efficient scheme.

In \cite{tang2021cutqc} an automated cutting procedure is applied to \cite{peng2020} to produce the minimum number of subcircuits needed, similarly \cite{perlin2021quantum} uses maximum likelihood fragment tomography to improve both the cutting process and the reconstruction of output states. Other authors have considered how the set of subcircuits could be run more effectively by utilising distributed computational resources \cite{saleem2021quantum, bechtold2021bringing}.

Using partitioning schemes produces an additional benefit: reduced noise, stemming from the smaller circuit size \cite{tang2021cutqc, avron2021quantum}, this can be the motivation for cut selection, even when the full circuit would ``fit" on a quantum machine \cite{basu2021qer}. This noise reduction is similar to the increased accuracy we may be able to provide to gradients in our model. The potential link between this, and the avoidance of barren plateaus in our model we will discuss in Section \ref{Implementation}. 

After developing our technique we will demonstrate its use on high dimensionality data, specifically handwritten digit recognition. This problem has been tackled before with quantum hardware. In \cite{haug2021large}, dimensionality reduction techniques (such as principal component analysis) are used to reduce the dimensionality of the digits to a feature vector small enough to fit on their 8 qubit machine. Similarly, in \cite{li2021quantum} handwritten digits are classified on an 8 qubit machine, in this instance the size of the data is not reduced, the full data is carefully encoded into the quantum computer, first with amplitude encoding, and then by using 11 layers of parameterised gates. Our approach is fundamentally different from either of these. We use the same sized data (8$\times$8 pixels) but do not apply dimensionality reduction as in \cite{haug2021large}, or reuse qubits for multiple data points as in \cite{li2021quantum}. We follow a simple encoding: giving each pixel its own qubit, which we can achieve as we are approximating a 64 qubit machine, while only using an 8 qubit machine.

Other works have addressed high dimensionality data by pushing the limit of the size of quantum machine learning models on current devices \cite{peters2021machine}. Some have employed quantum circuits as components of some larger algorithm to tackle bigger problems: \cite{fujii2022deep} recursively applies PQCs to the outputs of PQCs and \cite{yuan2021quantum} uses quantum circuits as part of a hybrid tensor network but both do not address the task of partitioning a larger circuit and running it efficiently as we do here.

\section{Model Motivation and Specification}\label{Develop Model}
In this section we introduce parameterised quantum circuits (PQCs), a popular concept in quantum machine learning; and circuit partitioning, a method of evaluating quantum circuits that requires a number of qubits greater than what is accessible. By applying these circuit partitioning schemes to PQCs we can produce a more powerful machine learning model than the smaller device naively allows, at the cost of unreasonable runtime. We then go on to develop a novel QML method which intuitively may be as useful requiring only a fraction of the runtime. 

\subsection{Parameterised Quantum Circuits}

Parameterised quantum circuits (PQCs) are a varied and promising method for quantum machine learning. In general they consist of some set of circuits, $\mathbb{U}$, parameterised by a weight vector, $\bm \theta$. In the most common forms the input datum, $\bm x$, also parameterises gates in the circuit. The set can be indexed as $\mathbb{U}=\{ U(\bm x;\bm \theta) \}$. These circuits yield functions when we specify an initial state, $\ket{\phi}$, and an observable, $M$:
\begin{equation}
    f_{\bm \theta, M, \ket{\phi}}(x) = \bra{\phi}U^ \dagger(\bm  \theta ,\bm  x) M U(\bm\theta ,\bm  x)\ket{\phi}
\end{equation}
\noindent
We can assume $\ket{\phi}$ is some fiducial state, such as $\ket{0}^{\otimes n}$ in the computational basis, without loss of generality. 

As each setting of $\bm \theta$ defines a (not necessarily unique) function, $f_{\bm \theta, M, \ket{\phi}}: \bm x \mapsto \mathbb{R}$  (with range limited by the spectrum of $M$), the set of unitaries defines a set of functions. We call this set the {\textit{hypothesis class}}, to coincide with the common usage in machine learning. PQCs have been studied in other contexts, such as quantum chemistry or condensed matter physics \cite{kandala2017hardware}, although it is likely our approaches might generalise to these areas, in this work we focus on its application to machine learning.

\begin{definition}[PQC hypothesis class]\label{model hypothesis class}
The hypothesis class generated by the family of parameterised quantum circuits $\mathbb{U}$ together with an observable $M$ is given by
\begin{align*}
    &\mathcal{F}_{\mathbb{U}, M}=
    \\
    & \hspace{0.8cm} \Big\{f_{\bm \theta}(\bm x) = 
    \\
    &\hspace{1cm}\bra{0}U^ \dagger(\bm \theta ,\bm  x) M U(\bm \theta , \bm  x)\ket{0} : \bm \theta \in [0,2\pi)^{N_{\mathrm{PS}}}\Big\},
\end{align*}

where $N_{\mathrm{PS}}$ is the number of parameters in the model.
\end{definition}

These PQCs have proven popular, but the implementation of PQCs is currently hindered by the NISQ machines they run on. Notably the limited number of qubits available limits the width (defined henceforth as number of qubits the circuit acts on) of the circuit that can be run. It is the central concern of this work to produce a model as useful as PQCs of width larger than what the available machines naively permit.

\subsection{Circuit Partitioning}
In \cite{bravyi2016,peng2020, mitarai2021constructing} the authors propose methods to simulate large quantum circuits on smaller quantum machines by partitioning the circuit into smaller disconnected blocks. In this section we will introduce and then employ these methods on PQCs to decrease the size of quantum computer needed. In our work the decompositions are based on the result of \cite{bravyi2016} however extension to the results of \cite{peng2020, mitarai2021constructing} are also possible. We present only the approach of \cite{bravyi2016} as they are, for our purposes, very similar.

Consider a partition of the $N$ qubits into blocks, $\{B_i\}_i$, where $B_i\subset [N]$ ($[N]:= 1,2,\ldots, N$) such that $\bigcup_iB_i = [N]$ and $B_i \cap B_j = \emptyset$ $\forall i\neq j$ (i.e. each qubit is in one and only one block of the partition). We use the fact that any unitary matrix can be decomposed into a sum of weighted tensor products of single qubit unitaries. In \cite{bravyi2016} this fact is used to decompose any particular 2-qubit gate into a gate of the form:
\begin{equation}
\label{Basic cutting identity}
    U = \sum \alpha_{i} u_i\otimes u'_i
\end{equation}
for some complex $\alpha_{i}$ such that $\sum |\alpha_i|^2 = 1$ and for 2 dimensional unitaries, $u_i$ and $u_i'$. The number of terms of the sum needed for any particular gate is given by its Schmidt number \cite{balakrishnan2011operator}, generically this number is 4 for 2-qubit entangling gates but for some important cases (including the CNOT and controlled-Z) only 2 terms are needed. For example, we can decompose the Controlled-Z gate into single qubit gates as:
\begin{equation}\label{Controlled z decomp}
    \mathrm{Controlled-Z} = \frac{1}{1+i} \left(S \otimes S + i S^\dagger \otimes S^\dagger \right) 
\end{equation}

where $S$ is the phase gate. The identity (\ref{Basic cutting identity}) allows us to rewrite any particular 2-qubit gate as the sum of products of single qubit operators. Applying this method to every 2-qubit gate connecting two blocks of the partition decomposes the full unitary into a sum of tensor products of unitaries which individually act only on each block of the partition (figure \ref{fig:cut up}). 

An example is useful in illustrating this point, suppose we are given a unitary $W$ which consists of two disconnected blocks apart from one 2-qubit gate, $G$, connecting the otherwise disjoint blocks, top and bottom (as in figure \ref{fig:cut up}):
\begin{equation}
    W = (U_\mathrm{top} \otimes U_\mathrm{bot}) G (V_\mathrm{top} \otimes V_\mathrm{bot})
\end{equation}
We can decompose this 2-qubit gate as $G = \sum_i \alpha_{i} u_i\otimes u'_i$. The full unitary can thus be written as:
\begin{equation}
\begin{split}
    W = &(U_\mathrm{top} \otimes U_\mathrm{bot})(\sum_i \alpha_{i} u_i\otimes u'_i)(V_\mathrm{top} \otimes V_\mathrm{bot})\\
      = &\sum_i  \alpha_{i}(U_\mathrm{top} \otimes U_\mathrm{bot})( u_i\otimes u'_i)(V_\mathrm{top} \otimes V_\mathrm{bot})\\
      = &\sum_i  \alpha_{i}(U_\mathrm{top} u_i V_\mathrm{top}) \otimes (U_\mathrm{bot} u'_i V_\mathrm{bot})
\end{split}
\end{equation}
Suppose the initial state is $\ket{0}^{\otimes n}$ (which we will simply refer to as $\ket{0}$)  and the measurement is the projection, $\ket{0}\!\bra{0}$ (using the previous notation for $\ket{0}^{\otimes n}$), we then have that 
\begin{align*}
    \bra{0}&W\ket{0} = 
    \\
    &\hspace{1cm}\bra{0} \sum_i  \alpha_{i}(U_\mathrm{top} u_i V_\mathrm{top}) \otimes (U_\mathrm{bot} u'_i V_\mathrm{bot}) \ket{0} \\
    &\hspace{0.2cm}= \sum_i \alpha_{i}\bra{0}(U_\mathrm{top} u_i V_\mathrm{top})\ket{0}\!\bra{0}(U_\mathrm{bot} u'_i V_\mathrm{bot})\ket{0}
\end{align*}
and the expectation value is given by:
\begin{align*}
    \Big|\bra{0}&W\ket{0}\Big|^2\!=\!
    \\
    &\sum_{i,j}  \bar{\alpha_i}\alpha_{j}\bra{0} (U_\mathrm{top} u_i V_\mathrm{top})^\dagger \ket{0}\!\bra{0} (U_\mathrm{top} u_j V_\mathrm{top}) \ket{0}
    \\
    &\hspace{1.2cm}\cdot \bra{0}(U_\mathrm{bot} u'_i V_\mathrm{bot})^\dagger \ket{0}\!\bra{0} (U_\mathrm{bot} u'_j V_\mathrm{bot}\ket{0}.
\end{align*}
which is the product of inner products local to either element of the partition. This allows us to evaluate each smaller inner product individually and then combine them in a product and sum to replicate the expectation value of the full circuit. Depending on the observable it may be preferable to calculate the expectation value (i.e. the previous equation) or to calculate the inner product presented in the equation before and then square the answer to calculate the expectation value.

These results provide us with a clear path to solve the central goal of this paper thus far, ``How to fit a larger model on a smaller machine". It is simply a matter of specifying a large PQC, then deciding on a partition $\{B_i\}_i$ that separates its initial state and measurement nicely. This partition defines a set of closely related circuits that differ only by the replacement of 2-qubit gates with single qubit gates. The next theorem encapsulates the partitioning of PQCs into a set of a set of smaller subcircuits, and the recombination of them to recreate the result of the larger PQC.

\begin{theorem}[Partitioned model]\label{partitioned model}
For every function $f_{\bm \theta} \in \mathcal{F}_{\mathbb{U}, M}$ and qubit partition $\{B_k\}_{k \in [K]}$ with observable $M = \bigotimes_{k \in [K]}M_k$, there exists a set of coefficients $\{c_i\}_{i \in [T]}$ and unitaries $\tilde{\mathbb{U}} = \{U^{i, k}(\bm \theta, \bm  x), U'^{i, k}(\bm \theta, \bm x)\}_{i \in [T], \text{ } k\in [K]}$ (where each $U^{i,k}$ and $U'^{i,k}$ acts on $n_k$ qubits) which can be combined in a function:
\begin{equation}
\label{Bravyi cut 1}
     \tilde{f}_{\bm \theta}(\bm x) = \sum_{i = 1}^{T} c_i \prod_{k=1}^K \bra{0}U'^{i, k\dagger}(\bm \theta, \bm x)M_{k}U^{i, k}(\bm \theta, \bm x)\ket{0},
\end{equation}
such that $ f_{\bm \theta}(\bm x) = \tilde{f}_{\bm \theta, M}(\bm x)$ for every $\bm \theta, \bm x$. 
For arbitrary gates the number of terms $T$ grows as $16^r$, where r is the number of gates across the partition, but for cut gates with known Schmidt number $T$ is the product of the Schmidt number squared of each cut gate.
\end{theorem}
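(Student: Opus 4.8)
The plan is to bring $f_{\bm\theta}$ into the form (\ref{Bravyi cut 1}) by applying the single-gate identity (\ref{Basic cutting identity}) to every gate that straddles two blocks of the partition and then carrying out the bookkeeping of the resulting multiple sum. Write $U(\bm\theta,\bm x)$ as a product of one- and two-qubit gates in a fixed time order, and call a two-qubit gate a \emph{cut gate} if its two qubits lie in two different blocks (a two-qubit gate with both qubits in one block, and every single-qubit gate, needs no treatment); let $g_1,\dots,g_r$ be the cut gates. For a cut gate $g$ acting across blocks $B_a$ and $B_b$, the operator Schmidt decomposition with respect to the bipartition (its $B_a$-qubit $\mid$ its $B_b$-qubit) writes $g=\sum_i\alpha^{(g)}_i\,u^{(g)}_i\otimes v^{(g)}_i$ with single-qubit unitaries $u^{(g)}_i,v^{(g)}_i$ and at most four terms, since the single-qubit operator space is four-dimensional; this is exactly (\ref{Basic cutting identity}), with only two terms when $g$ is a CNOT or controlled-$Z$ as in (\ref{Controlled z decomp}).

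First I would substitute all $r$ of these decompositions into $U$ simultaneously and distribute the products. For each choice of decomposition indices $\bm i=(i_{g_1},\dots,i_{g_r})$ the corresponding term is a product, in the circuit's time order, of unitaries each supported on a single block, times the scalar $\beta_{\bm i}:=\prod_{m=1}^{r}\alpha^{(g_m)}_{i_{g_m}}$. Since an operator supported on one block commutes with every operator supported on the other blocks, the factors belonging to block $B_k$ can be collected — keeping their relative order — into a single $n_k$-qubit unitary $U^{\bm i,k}(\bm\theta,\bm x)$, so that
\[
    U(\bm\theta,\bm x)=\sum_{\bm i}\beta_{\bm i}\,\bigotimes_{k=1}^{K}U^{\bm i,k}(\bm\theta,\bm x),
\]
a sum of $S:=\prod_{m=1}^{r}s_{g_m}$ terms, where $s_g\le 4$ is the Schmidt number of $g$.

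Then I would insert this into $f_{\bm\theta}(\bm x)=\bra{0}U^\dagger(\bm\theta,\bm x)\,M\,U(\bm\theta,\bm x)\ket{0}$, using that the state of every $f_{\bm\theta}\in\mathcal F_{\mathbb U,M}$ is the product state $\ket{0}^{\otimes n}=\bigotimes_k\ket{0}_{B_k}$ (Definition \ref{model hypothesis class}) and that $M=\bigotimes_kM_k$ by hypothesis. Expanding both sums and using $\bra{0}\big(\bigotimes_k A_k\big)\ket{0}=\prod_k\bra{0}A_k\ket{0}$ gives $f_{\bm\theta}(\bm x)=\sum_{\bm i,\bm j}\overline{\beta_{\bm i}}\,\beta_{\bm j}\prod_{k=1}^{K}\bra{0}U^{\bm i,k\dagger}M_kU^{\bm j,k}\ket{0}$. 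Relabelling the pair $(\bm i,\bm j)$ by a single index $m\in[T]$ with $T=S^2=\prod_g s_g^2$, and setting $c_m:=\overline{\beta_{\bm i}}\,\beta_{\bm j}$, ${U'}^{m,k}:=U^{\bm i,k}$ and $U^{m,k}:=U^{\bm j,k}$, this is precisely (\ref{Bravyi cut 1}) together with the claimed identity $f_{\bm\theta}=\tilde f_{\bm\theta}$; the count $T=\prod_g s_g^2$ — hence $T\le 16^r$ for arbitrary two-qubit cut gates and $T=4^r$ when all cuts are CNOT or controlled-$Z$ — follows immediately.

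I do not expect a genuine obstacle, the statement being in essence careful bookkeeping, but the step to argue precisely is the regrouping: one must check that substituting all cut-gate decompositions at once and then collecting factors really produces, term by term, a tensor product \emph{across all $K$ blocks}, which holds exactly because each $u^{(g)}_i$ and $v^{(g)}_i$ is supported on one qubit of one block and hence commutes past every operator outside that block, even when several cut gates share a qubit. It is also worth recording where the hypotheses enter: the product observable $M=\bigotimes_kM_k$ and the product fiducial state are what let the block-local inner products in (\ref{Bravyi cut 1}) factor, while restricting the cut gates to two qubits is what caps each Schmidt number at $4$ (a $k$-qubit cut gate would contribute a factor $\le 4^{k-1}$ instead).
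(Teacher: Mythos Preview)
Your proposal is correct and follows essentially the same route as the paper's proof: decompose each cut two-qubit gate via its operator Schmidt decomposition (equation~\ref{Basic cutting identity}, Schmidt rank $\le 4$), collect the resulting single-block factors into a tensor product $U=\sum_{\bm i}\beta_{\bm i}\bigotimes_k U^{\bm i,k}$, and then expand $U^\dagger M U$ using the product form of both $\ket{0}$ and $M$ to obtain the double sum with $T=\prod_g s_g^2\le 16^r$ terms. The paper's appendix proof is a terse sketch of exactly this argument, citing \cite{bravyi2016} and \cite{balakrishnan2011operator} for the decomposition and Schmidt-rank bound; you have simply carried out the regrouping and index bookkeeping explicitly, including the commutation argument that justifies collecting block-local factors into a genuine tensor product.
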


\begin{remarks*}
In many cases the same subcircuit (or its complex conjugate) appears multiple times in equation \ref{Bravyi cut 1}. By storing its value in classical memory the total number of circuit evaluations can be brought down to $6^r$ where $r$ is the number of gates across the partition (as mentioned in \cite{bravyi2016}). Bounding the number of evaluations needed for a given circuit is a task studied in \cite{mitarai2021overhead} in the context of the scheme in \cite{mitarai2021constructing}.

It must also be noted that Equation \ref{Bravyi cut 1} is composed of inner products, not expectation values, thus requiring 2 circuits to evaluate. Further details on this and the effects of error are considered in Section \ref{Implementation}.
\end{remarks*}

Mapping this theorem onto our example $K=2$, the set of coefficients would be $\{\overline{\alpha}_i \alpha_j\}$ and the set of unitaries would be 
\begin{align*}
    &\big\{ \left\{
U_\mathrm{top}u_iV_\mathrm{top}, 
U_\mathrm{top}u_jV_\mathrm{top}, 
\right\},\\
&\left\{
U_\mathrm{bot}u'_iV_\mathrm{bot},
U_\mathrm{bot}u'_jV_\mathrm{bot}
\right\}\big\}_{i,j}.
\end{align*}

\noindent
This example also illustrates the similarity of terms in equation \ref{Bravyi cut 1}, for every ``top" circuit is identical up to the replacement of $u_i, \space u_j$.

Theorem \ref{partitioned model} is useful to our goal, we can fit any large PQC on a small machine, however we have paid a huge price in the need to run an exponential number of smaller circuits. Indeed given that most QML models are relatively densely connected and increasing depth can lead to improved performance, this exponential overhead in number of cut connections is impractically costly. For example a 2-block division of the hardware efficient ansatz up to depth 6, such as those considered in \cite{jerbi2021variational} to solve a simple task would require over 2 billion distinct sub-circuit evaluations. This rough estimation motivates us to revise our goal to ``how to fit a larger model on a smaller machine in an \textit{acceptable} number of circuit evaluations".

If we are interested in exactly recreating the output of the circuit, this goal might be unattainable, unless we can find an exponential number of terms that perfectly cancel each other. There are fortunately several acceptable simplifications we can make to our goal. First, we are not concerned with the exact replication of the unitary. Since our input states are fixed to $\ket{0}$ we only care about the action of our recreated unitary on this state. Second, we may be content with approximate results, or perhaps even approximate results for \textit{most} input data. Finally, our ultimate goal for a machine learning model, in many cases, is simply to output a binary classifier \cite{mohri2018foundations} (or another simpler discrete set of outputs) so we are not interested in keeping terms which contribute similarly as other terms in the final assignment of a class label.

With this in mind we will now define the subset partition model as the best possible approximation of the full result in theorem \ref{partitioned model} keeping only $L$ terms.

\begin{definition}[Subset partition model]
For a partitioned model, $\tilde{f}_{\bm \theta}(\bm x)$, with set of unitaries $\tilde{\mathbb{U}}$, we define the $L$-subset partition model as a function using the optimal $L$-sized subset of terms $I\subset \tilde{\mathbb{U}}$ given by:
\begin{align*}
     \tilde{f}^I_{\bm \theta}(\bm x) = &\sum_{i \in I} \lambda_i \prod_{k=1}^K \bra{0}U'^{i, k\dagger}(\bm \theta, \bm x)M_{k}U^{i, k}(\bm \theta, \bm x)\ket{0}.
\end{align*}
where we have introduced free parameters $\lambda_i$ that can also be optimised over. In the above definition $I$ and $\bm \lambda$ are optimised to produce the best approximation of $\tilde{f}_{\bm \theta}(\bm x)$, for some given success metric.

\end{definition}

Inner products can often involve complex numbers, therefore the output of the model may be complex. However, most scenarios require a real number, in these cases, we take the real part and discard the imaginary.

This model is a step towards our goal, if we are given the model it would be possible to run some approximation of the partitioned circuit on a small computer in acceptable time. However we lack the capacity to chose the \textit{optimal} set of ``small circuits" $I$, in general choosing this set corresponds to a combinatorial optimisation problem. In the next section we will describe why this problem is challenging and produce a model that can work around it.

\subsection{Reduced Partition Model}
In the last section we tackled the problem of how to fit a large model on a smaller machine, but it required us to run an impractical number of circuits to achieve our goal, we introduced a model to get around this but it was impractical to optimise. We now consider a situation where we are given a runtime ``budget", a hypothetical number of circuits, $L$, that we can afford to evaluate. Choosing which $L$ circuits to evaluate from the set generated by the partitioning to perform optimally is an incredibly challenging combinatoric optimisation problem. This process is additionally complicated by the apparent need to use a quantum computer to assess if the circuits can be ignored. In this section we propose a relaxation of the problem: by parameterising the gates that replaced the 2-qubit gates in the circuit cutting process (henceforth called partition gates) such that all terms in the sum are identical up to these introduced parameters. The problem of optimising circuit selection becomes one of optimising the parameters of the partition gates. 

The first step of this process is parameterising the gates introduced by the partition. There are many options for doing this, for example when cutting the controlled Z we get the decomposition in equation \ref{Controlled z decomp}, replacing the 2-qubit gate on either qubit by $S$ or $S^\dagger$. We then wish to create a new parameterised gate which takes a parameter $\zeta$ such that the parameterised gate is $S$ when $\zeta=0$ and $S^\dagger$ when $\zeta = 1$. $Z^\zeta$ composed with $S^\dagger$ is one choice. Defining $\zeta$ this way also allows us to extrapolate gates for  $\zeta \notin \{0,1\}$, creating a continuous parameter we can use for e.g. gradient descent.

We can use this partitioned-gate-parameterisation trick to replace the set $\{U^{i, k}(\bm \theta,\bm  x), U'^{i, k}(\bm \theta,\bm  x)\}_{i, \text{ } k}$, with a new set, $\{U^{k}(\bm \theta,\bm  x, \zeta)\}_{k\in [K]}$, with just one parameterised unitary for each block of the partition, with different terms of the sum differentiated only by different parameters $\zeta$.

\begin{lemma}
\label{lemma:reduced}
For every $\tilde{f}_{\bm \theta} \in \mathcal{F}^L_{\mathbb{U}, M}$, there exists a set of unitaries $\{U^{k}(\bm \theta, \bm x, \bm \zeta)\}_{k\in [K]}$ and parameters $\bm \lambda, \bm \zeta$ defining a function:
\begin{equation}
\begin{split}
\label{full parameterised Bravyi}
     \bar{f}_{\bm \theta, \bm \zeta, \bm \lambda}(\bm x)  & =
     \\
     &\sum_{i \in [L]} {\lambda_i} \prod_{k\in[K]} \bra{0}U^{k \dagger}(\bm \theta,\bm  x, \bm{\zeta}_{i,k})M_{k}
     \\
     &\hspace{2.7cm}U^{k}(\bm \theta,\bm  x, \bm{\zeta}_{i,k+K})\ket{0},
\end{split}
\end{equation}
\noindent
such that $\tilde{f}_{\bm \theta}(\bm x) = \bar{f}_{\bm \theta, \bm \zeta, \bm \lambda}(\bm x)$ for every $\bm \theta,\bm  x$ and for every observable that can be written as tensor product on the elements of the partition $M = \bigotimes_k M_k$. 
\end{lemma}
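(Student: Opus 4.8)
The plan is to unpack the construction behind Theorem~\ref{partitioned model} and observe that, on each block of the partition, the only freedom distinguishing the $L$ kept terms sits in the single-qubit gates that replaced the cut two-qubit gates; the reduced form then follows by packaging that freedom into a continuous parameter. Write the given $L$-subset partition model as $\tilde f_{\bm\theta}=\tilde f^{I}_{\bm\theta}$ with retained index set $I$, $|I|=L$. For a fixed block $B_k$, the unitary $U^{i,k}(\bm\theta,\bm x)$ appearing in Equation~\ref{Bravyi cut 1} is a fixed ``skeleton'' circuit --- built from exactly the $\bm\theta$- and $\bm x$-dependent gates of the original PQC $U(\bm\theta,\bm x)$ that act inside $B_k$ --- in which each site $g$ of a cut two-qubit gate is filled by one of finitely many single-qubit unitaries $w^{(g)}_1,\dots,w^{(g)}_{d_g}$ coming from a Schmidt decomposition of that gate. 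Different terms $i$ differ \emph{only} in which $w^{(g)}_s$ is placed at each site $g$, the coefficient $c_i$ being the corresponding product of Schmidt coefficients (and their conjugates), and the same is true of $U'^{i,k}(\bm\theta,\bm x)$. This is exactly the observation made in the remark after Theorem~\ref{partitioned model}, and it is inherited by $\tilde f^{I}_{\bm\theta}$, which merely keeps $L$ of these terms.

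Next I would parameterise each cut site. By the Euler decomposition of $\mathrm{SU}(2)$, every single-qubit unitary equals, up to a global phase, a product $R_z(\zeta_1)R_y(\zeta_2)R_z(\zeta_3)$ of three elementary rotations, so there is a fixed three-parameter single-qubit gate family $V^{(g)}(\bm\zeta^{(g)})$ together with, for each $s$, a parameter value that realises $w^{(g)}_s$ exactly up to phase. Concatenating these parameter blocks over the cut sites incident to $B_k$ gives a vector $\bm\zeta$, and I would define $U^{k}(\bm\theta,\bm x,\bm\zeta)$ to be the block-$k$ skeleton with site $g$ filled by $V^{(g)}(\bm\zeta^{(g)})$; this is a genuine unitary for every $\bm\zeta$. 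The point is that this single parameterised unitary simultaneously covers $U^{i,k}$ and $U'^{i,k}$ for all $i$, since those share the skeleton and differ only at the cut sites.

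To finish I would match the two sums. Relabel the kept terms by $i\in[L]$; for each $i$ and $k$ let $\bm\zeta_{i,k}$ be the parameter vector whose site blocks realise the single-qubit factors of $U'^{i,k}(\bm\theta,\bm x)$ and $\bm\zeta_{i,k+K}$ the one realising those of $U^{i,k}(\bm\theta,\bm x)$, and set $\lambda_i$ to the coefficient of term $i$ in $\tilde f^{I}_{\bm\theta}$ multiplied by the single definite global phase accumulated from the Euler decompositions across the product over $k$. Then, suppressing the shared dependence on $\bm\theta,\bm x$,
\[
\bra{0}U^{k\dagger}(\bm{\zeta}_{i,k})M_{k}U^{k}(\bm{\zeta}_{i,k+K})\ket{0}=\bra{0}U'^{i,k\dagger}M_{k}U^{i,k}\ket{0}
\]
for every $i$ and $k$, and multiplying over $k$ and summing over $i$ gives $\bar f_{\bm\theta,\bm\zeta,\bm\lambda}(\bm x)=\tilde f^{I}_{\bm\theta}(\bm x)$ for all $\bm\theta,\bm x$. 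Every step used only the product structure $M=\bigotimes_k M_k$ over the blocks, exactly as in Theorem~\ref{partitioned model}, so the conclusion holds for any such observable.

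I expect the parameterisation step, not the bookkeeping, to be where care is needed: one must check that a \emph{single} fixed single-qubit gate family reproduces \emph{every} single-qubit unitary that the Schmidt decompositions of the cut gates can place at a given site, and does so consistently enough that the residual per-site global phases collect into one phase per term $i$ that can be folded into $\lambda_i$. Euler-angle universality of $\mathrm{SU}(2)$, together with that absorption of phases into the free coefficients, closes this; the remainder is the routine matter of tracking how the double index implicit in Theorem~\ref{partitioned model} collapses into the single index $i\in[L]$ together with the two parameter slots $\bm\zeta_{i,k}$ and $\bm\zeta_{i,k+K}$ attached to each block.
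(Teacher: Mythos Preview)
Your proposal is correct and follows the same line as the paper's proof: both observe that, on each block, the unitaries $U^{i,k}$ and $U'^{i,k}$ differ only in the single-qubit gates inserted at cut sites, so a single parameterised family on those sites recovers every term. You are simply more explicit than the paper, which just says ``parameterising the difference between the unitaries proves the result''---your use of Euler angles and your handling of the accumulated global phases via $\lambda_i$ are a valid way to flesh this out.
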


In this lemma we have used our new free parameter $\bm \zeta$ to parameterise the partition gates, the parameters needed for these partition gates could be calculated from the partitioning theorem or trained through gradient descent. As mentioned, the advantage of this step is that now all terms of the sum are equivalent to each other up to weight parameters $\lambda_i$ and $\bm \zeta_{i,k}$. This is useful in the final model, where we reduce the number of terms to $L$ and then allow these parameters to learn freely, making the model capable of replicating any $L$ terms present in the original model by changing $\lambda_i$ and $\bm \zeta_{i,k}$.

\begin{definition}[Reduced partition model]
For a PQC hypothesis class $\mathcal{F}_{U, M}$, we define the reduced $L$-subset partition model as the family of functions $\overline{\mathcal{F}^L_{\mathbb{U}, M}} = \big\{ \bar{f}_{\bm \theta, \bm \zeta, \bm \lambda}\big\}$ where each function is given by
\begin{equation}
\label{RPM sum}
\begin{split}
    \bar{f}_{\bm \theta, \bm \zeta, \bm \lambda}(\bm x) &= 
    \\
    &\sum_{i \in [L]} \lambda_i \prod_{k\in[K]} \bra{0}U^{k \dagger}(\bm \theta,\bm  x, \zeta_{i,k})M_{k}
    \\
    &\hspace{2.7cm}U^{k}(\bm \theta, \bm x, \zeta_{i,k+K})\ket{0}
\end{split}
\end{equation}
where the unitaries $U^{k}$ are those described in Lemma~\ref{lemma:reduced} and we have introduced entirely free parameters $\bm \lambda$ and $\bm \zeta$ that can be optimised over. This can also be referred to as a ``reduced partition model" when $L$ is to be specified later.
\end{definition}


This new model introduces more free parameters, $\bm \zeta$, into our model, fortunately only $2L\times$ \textit{number of cut gates} are introduced.

The reduced partition model can now use the similarity of the terms of the equation \ref{RPM sum} (they are identical up to the weight vector, $\bm \zeta$) to replicate any subset of terms taken from the partitioned model by simply adjusting the parameters $\bm \lambda$ and $\bm \zeta$. This is stated formally in the following theorem.

\begin{theorem}\label{thm:inclusion}
For any PQC hypothesis class $\mathcal{F}_{\mathbb{U}, M}$, the $L$-subset partition hypothesis class $\mathcal{F}^L_{\mathbb{U}, M}$ is included in the hypothesis class of the reduced $L$-subset partition model $\overline{\mathcal{F}^L_{\mathbb{U}, M}}$, i.e., 
\begin{equation}
    \mathcal{F}^L_{\mathbb{U}, M} \subset \overline{\mathcal{F}^L_{\mathbb{U}, M}}.
\end{equation}
\end{theorem}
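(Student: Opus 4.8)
The plan is to show that an arbitrary member of $\mathcal{F}^L_{\mathbb{U}, M}$ can be reproduced exactly by a suitable choice of the free parameters $\bm\lambda$ and $\bm\zeta$ in the reduced partition model, so the inclusion will follow pointwise in $\bm x$ with the \emph{same} underlying weights $\bm\theta$. So I would begin by fixing a function $\tilde{f}^I_{\bm\theta} \in \mathcal{F}^L_{\mathbb{U}, M}$: it is specified by a weight vector $\bm\theta$, an $L$-element subset $I$ of the terms appearing in the partitioned model $\tilde{f}_{\bm\theta}$ of Theorem~\ref{partitioned model}, and coefficients $\{\lambda_i\}_{i\in I}$ (themselves free). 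The goal is to produce $\bm\zeta$ and $\bm\lambda'$ making $\bar f_{\bm\theta,\bm\zeta,\bm\lambda'}=\tilde f^I_{\bm\theta}$.

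First I would recall the partitioned-gate-parameterisation step underlying Lemma~\ref{lemma:reduced}: each subcircuit $U^{i,k}(\bm\theta,\bm x)$ (resp.\ $U'^{i,k}$) differs from the others only in which single-qubit gate sits at each cut location, and the family $U^{k}(\bm\theta,\bm x,\bm\zeta)$ is built so that, for every term $i$ of the full partitioned sum, there is a setting of the partition-gate parameters --- call it $\bm\zeta^{(i)}_k$ for the ``ket'' side and $\bm\zeta^{(i)}_{k+K}$ for the ``bra'' side --- with $U^{k}(\bm\theta,\bm x,\bm\zeta^{(i)}_k)=U^{i,k}(\bm\theta,\bm x)$ and $U^{k}(\bm\theta,\bm x,\bm\zeta^{(i)}_{k+K})=U'^{i,k}(\bm\theta,\bm x)$ for all $k\in[K]$. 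The only fact needed here is that the chosen single-qubit parameterisation (for instance $Z^\zeta$ followed by $S^\dagger$ for a cut controlled-$Z$, and the analogous families for the other Schmidt-rank $\le 4$ gates) really does pass through each of the finitely many single-qubit unitaries produced by the decomposition of Equation~\ref{Basic cutting identity}; this is exactly what Lemma~\ref{lemma:reduced} asserts, so I would simply invoke it rather than re-deriving it.

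Next I would index the $L$ slots of the reduced model by the $L$ elements of $I$, and set, for each $i\in I$, the parameter block $\bm\zeta_{i,\cdot}$ of the reduced model equal to $\bm\zeta^{(i)}$ from the previous step and $\lambda'_i$ equal to the coefficient $\lambda_i$ of the $i$-th term in $\tilde f^I_{\bm\theta}$. With these assignments the $i$-th summand of $\bar f_{\bm\theta,\bm\zeta,\bm\lambda'}$ becomes $\lambda_i\prod_{k}\bra{0}U'^{i,k\dagger}(\bm\theta,\bm x)M_kU^{i,k}(\bm\theta,\bm x)\ket{0}$, i.e.\ precisely the $i$-th summand of $\tilde f^I_{\bm\theta}$, for every $\bm x$; summing over $i\in I$ gives $\bar f_{\bm\theta,\bm\zeta,\bm\lambda'}=\tilde f^I_{\bm\theta}$. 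Since $\tilde f^I_{\bm\theta}$ was an arbitrary element of $\mathcal{F}^L_{\mathbb{U},M}$, this establishes $\mathcal{F}^L_{\mathbb{U},M}\subset\overline{\mathcal{F}^L_{\mathbb{U},M}}$. A one-line remark would note that $|I|=L$ matches the number of terms in the reduced model exactly, so every chosen term gets its own slot; if one wanted to allow $|I|<L$ the leftover slots are harmless since their contribution can be zeroed through the corresponding $\lambda$.

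I expect the only genuine content --- and hence the place to be careful --- is the claim that a single continuous partition-gate family can be routed through all the distinct single-qubit gates the cut decomposition assigns to a given cut site across different terms of the sum; once that is granted (it is the substance of Lemma~\ref{lemma:reduced}), the remainder is just re-indexing and term-by-term matching of identical expressions, with nothing depending on $\bm x$ beyond what is already shared between the two models.
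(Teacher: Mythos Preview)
Your proposal is correct and follows the same approach as the paper: invoke Lemma~\ref{lemma:reduced} to obtain, for each of the $L$ selected terms, a setting of the partition-gate parameters $\bm\zeta$ (and copy over the $\lambda_i$), then match summands term by term. The paper's proof is a one-line appeal to that lemma; your version simply unpacks the re-indexing and the $|I|\leq L$ slot-matching more explicitly.
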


In other words, if a given classifier can be sufficiently well approximated by considering only $L$ terms, then the hypothesis class of the reduced partition model can do at least as well as this approximation. This is the potential advantage of our model. Additionally, the relaxation from manually picking terms to optimising $\bm \zeta$ allows us to apply gradient based methods, generally yielding much easier optimisation, but in general suffers as the solutions of the relaxation do not encode meaningful solutions of the original problem (which is discrete in nature). 
However in our case, since we deal with QML, all this achieves is expanding the hypothesis class, where any solution is meaningful, and optimisation (if done completely) can only yield better results with respect to the training error. Although, when expanding the hypothesis class, the problem may become worse generalisation performance, often evidenced by looser/worse generalisation bounds. We analyse these in the next section.


\section{Generalisation Error}\label{gen error section}

In creating the reduced partition model, we partitioned the circuit and removed terms, which intuitively makes the model simpler. But then, we introduced free parameters, making the model more complicated and increasing the size of the (reduced model) hypothesis class. 
We will formally study the effect these alterations have in terms of \textit{generalisation error}, defined roughly as the gap between performance on a training set and performance on unseen data from the same distribution. We will only briefly define a few concepts that are needed, readers keen to see a more are referred to \cite{mohri2018foundations}.

We define a supervised learning task on a domain $\mathcal{X}$ and co-domain $\mathcal{Y}$ with a probability distribution over $\mathcal{X}\times \mathcal{Y}$, $P$, and loss function, $\ell:\mathcal{Y}\times \mathcal{Y} \rightarrow \mathbb{R}$. Supervised learning is the task of outputting a hypothesis, $h \in \mathcal{Y}^\mathcal{X}$, such that the risk, $R(h)$ is minimised. We define the risk for a hypothesis $h$ on a continuous space as the expected loss:
\begin{equation}
    R(h) = \int_{\mathcal{X}\times\mathcal{Y}} \ell(h(x), y) dP(x,y).
\end{equation}
In practical settings we normally lack access to the underlying probability distribution, so the true risk cannot be evaluated. Instead we are supplied with training data drawn from $P$, $S = \{(x_i,y_i) \sim P\,|\,i\in [m]\}$, and must settle for evaluating the risk on this finite set. We call this the empirical risk of $h$ with respect to $S$:
\begin{equation}
    \hat{R}_S(h) = \frac{1}{|S|}\sum_{(x_i,y_i) \in S} \ell(h(x_i), y_i).
\end{equation}
Optimising our hypothesis on the training data optimises the empirical risk, which is generally a good proxy for the true risk. The gap between these two risks is bounded by generalisation bounds, specifically by a generalisation gap function $g$, which can depend on many properties given the specifics of the learning task and hypothesis class. 
Here we will bound the gap with a function independent of the data distribution, depending only on the the hypothesis class, $\mathcal{F}$, the size of the training set, $m$, and an acceptable overall failure probability, $\delta$. 

More precisely, we will aim for a probabilistic bound on generalisation gap in terms of  the risks of the form:
\begin{equation}
    P\left(R(h)<\hat{R}_S(h) + g(\mathcal{F}, m, \delta)\right)>1-\delta
\end{equation}
Intuitively, the gap has to do with the concept of ``overfitting". Simple models tend to have much smaller generalisation gaps. A function which outputs random labels and does no learning has a generalisation gap of $0$ but a large empirical risk. In contrast, some complex and large models are found to ``overfit" data, where the empirical risk drops to near zero but the generalisation of the model is very poor, with poor performance on data points not seen during training. Since our model contains more parameters than the model we derived it from, we might fear we have slid into the poor generalisation-good empirical risk category. 

We study this question, first, by using a well known result on the generalisation performance of quantum circuits, using generalised trigonometric polynomials \cite{caro2021encoding} and then by using the additive property of the Rademacher complexity to examine the effect $L$ has on generalisation performance. In the first analysis we will see that the method does not distinguish the generalisation properties of our model from that of the unpartitioned PQC it was derived from. One one hand this is positive, the new parameters did not decrease performance according to this bound, but on the other hand this is negative as these bounds do not identify $L$ as a parameter influencing model complexity at all, which intuitively it should. To overcome the latter issue, the second analysis uses the additive property of the Rademacher complexity to examine how more terms (a larger $L$) in our model effects the generalisation error. 

\subsection{Encoding Dependent Generalisation Gap} \label{sect:Encoding Dependent Generalisation Gap}

Recall, our objective is to study the generalisation bonds of our model, which attains the form in equation \ref{RPM sum}, where the salient paramters are the number of terms in the summand ($L$) and the number of blocks in the product ($K$).

One insightful analysis of generalisation performance is given in \cite{caro2021encoding}, we will show that its bounds apply directly to our model. The analysis first imports a result shown in \cite{schuld2021effect, gil2020input}, that the  function implemented by any PQC, $f_{\bm \theta} (x)$, is as a generalised trigonometric polynomial(GTP):
\begin{equation}\label{Gtp basic}
    f_{\bm \theta} (x)=\sum_{\omega \in \Omega} c_\omega(\bm \theta, M) e^{-i\omega x}
\end{equation}
\noindent
where the effect of all the parameterised gates and the measurement is only reflected in the coefficients $\{c_\omega\}$. The frequencies available in the GTP ($\Omega$) are determined entirely by the input data's encoding strategy, specifically the eigenvalue spectra of Hamiltonians encoding the input data, typically as rotation gates. Further study of the spectra of frequencies, $\Omega$, is available in the aforementioned works.

With a very similar analysis it can also be shown that a GTP of this form exists for each term of our sum: Consider a single term, $T_i$
\begin{align}
     &T_i =  
     \\&\lambda_i \prod_{k\in[K]} \bra{\phi_k}U^{k}(\bm \theta, \bm x,\bm  \zeta_{i,k})M_{k}U^{k}(\bm \theta, \bm x,\bm  \zeta_{i,k+K})\ket{\phi_k}
\end{align}
\noindent
this is equivalent to reuniting $\ket{\phi_k}$ and $M_k$ from product form, and combining $\bigotimes_{k\in[K]}U^{k}(\bm \theta,\bm  x, \zeta_{i,k+K}) =: U(\bm \theta, x, \bm \zeta_{i})$ into:
\begin{equation}
     T_i =  \lambda_i  \bra{\phi}U(\bm \theta, \bm x, \bm \zeta_{i})M U(\bm \theta, \bm x, \bm \zeta'_{i})\ket{\phi}
\end{equation}
\noindent
this term is now an inner product of an incredibly similar form to the PQC it is derived from, indeed if the encoding gates are untouched by the partitioning scheme then $T_i$ has the same encoding gates and it can be shown admits a representation as a GTP of the same form, with the exact same spectra, $\Omega$. Our new GTP will contain different (and now possibly complex) $\{c_\omega\}$.

Since each term can be represented as a GTP with the same $\Omega$ we are able to combine them into another GTP: 
\begin{align*}
     &\bar{f}_{\bm \theta, \bm \zeta, \bm \lambda}(\bm x)= 
     \sum_{i \in [L]} \lambda_i \sum_{\omega \in \Omega} c_\omega(\bm \theta, \zeta_i,M) e^{-i\omega \bm x}=\\
     &\sum_{\omega \in \Omega} e^{-i\omega \bm x} \sum_{i \in [L]} \lambda_i c_\omega(\bm \theta, \bm \zeta_i,M) =
     \\
     &\sum_{\omega \in \Omega} e^{-i\omega \bm x} c'_\omega(\bm \theta, \bm \zeta,M)
\end{align*}
\noindent
with new weights: $\{c_\omega' =  \sum_{i \in [L]} \lambda_i c_\omega(\bm \theta, \zeta_i,M)'\}$.

This defines a new GTP of the same degree and the same $\Omega$ as the full sized circuit which we originally partitioned. Performing the analysis of type presented in \cite{caro2021encoding} for our circuit gives identical bounds as for the whole (unpartitioned) model.

As our model dramatically differs in the number of terms (which ought to decrease the gap), yet is much more general in the parameters that are free (which should increase the complexity), we see that this bounding technique is quite coarse-grained. In particular, even just pure product models (no entangling gates) which are trivially classically simulatable have the same bounds. The fact that the GTP approach yields somewhat loose bounds was emphasized in \cite{caro2021encoding} and as we have only bounded that bound we must tighten our analysis to achieve a meaningful bound, in the next section we will achieve this.

\subsection{Term-Based Generalisation Gap}\label{term gen error}
In subsection \ref{sect:Encoding Dependent Generalisation Gap} we saw that 
using the analysis technique from \cite{caro2021encoding} the generalisation error analysis for the un-partitioned model matched those of our new model. 

The reason for this was that this method inherently only analyzes the way the data is encoded (i.e., how the individual unitaries depend on the input), and this feature is not different between the partitioned and unpartitioned model.
In order to obtain bounds which are actually sensitive to the cutting process it is important to examine another approach. We want to analyse an approach that fundamentally considers the increasing number of terms. 
To this end, in this section we will introduce and bound a complexity measure, the Rademacher complexity, finding that our bound scales  linearly in $L$.

The Rademacher complexity \cite{mohri2018foundations} is measure of a function family's ability to assign arbitrary labelling to a set of input data. Given some particular input dataset $S=(x_1,x_2,\ldots, x_m)$ the Rademacher complexity of a function family $\mathcal{F}$ is 
$$
\mathcal{R}_S(\mathcal{F}) = \frac{1}{m} \mathbb{E}_\sigma\left[
\sup_{f\in\mathcal{F}}\sum_{i=1}^m \sigma_i f(x_i)
\right].
$$
The above expectation is over $m$ i.i.d. binary random variables $\sigma$ with equal chance of being $+1$ or $-1$.
The random variables simulate the random labeling of the data. By maximizing $\sum_{i=1}^m \sigma_i f(x_i)$ we identify the classifier, element of the hypothesis class, which intuitively  does the best job of matching these random labels on average. Our results hold for any given dataset so we will omit $S$ and use just $\mathcal{R}(\mathcal{F})$

The main tool we will use is the sub-additive property of the Rademacher complexity \cite{mohri2018foundations, bartlett2002rademacher}:
\begin{align}
    \label{eq:rademacher}
    \mathcal{R}\big(\mathcal{F} + \mathcal{G}\big) \leq \mathcal{R}\big(\mathcal{F}\big) + \mathcal{R}\big(\mathcal{G}\big).
\end{align}
For two families of functions $\mathcal{F}$ and $\mathcal{G}$, where the sum $\mathcal{F}+\mathcal{G} = \{f+g:f \in \mathcal{F}, g \in \mathcal{G}\}$. Taking $R_T$ as the maximum Rademacher complexity of any of the summands in our model. We can bound the Rademacher complexity of our model by  $O(R_T L)$, a linear increase with the number of terms in our model. 
This bound does not directly depend on $K$ (the number of partitioned blocks in our model). However, larger values of $K$ may require larger values of $L$ in order to mimic the behaviour of the unpartitioned model.

Comparing $R_T$, the Rademacher complexity of a single term of the summand in the model, to the Rademacher complexity of the unpartitioned model presents challenges: In general, a single term of the model is a product of smaller blocks, where the two qubit gates between blocks have been replaced by single qubit gates. Intuition tells us that removing these connecting gates from a circuit should reduce the expressivity, however, cases can be constructed where removing two qubit gates increases generalisation performance (an example is presented in appendix \ref{appendix}). Thus it is impossible to simplify the Rademacher bound any further while remaining maximally general i.e. without resorting to circuit specific methods.
\section{Evaluation and training of reduced partition model}\label{Implementation}

In this section we look at how one can evaluate the circuits, what error this would entail, and how it might be trained, we speculate on a possible feature of partitioned PQCs that might placate the effects of so-called ``barren plateaus".

\subsection{Evaluation} 

Evaluation of the reduced partition model is a non-trivial task, the terms are composed not of expectation values (which can be evaluated with simple circuits) but of inner products, with different unitaries on either side of the observable. Fortunately this is not an insurmountable problem. To evaluate these inner products we can employ the Hadamard test shown in figure \ref{fig:hadamard}. The most challenging component of this circuit is the application of a controlled-$U/V$, naively this would require controlled gates for every gate in $U$ and in $V$. Fortunately this is not the case. Since $U$ and $V$ differ only by the partition gates, the controlled-circuits can be constructed with controlled operations only on these partition gates, which is a small subset of the total number of gates in the circuit.

\begin{figure}
    \centering
    \includegraphics[width=6cm]{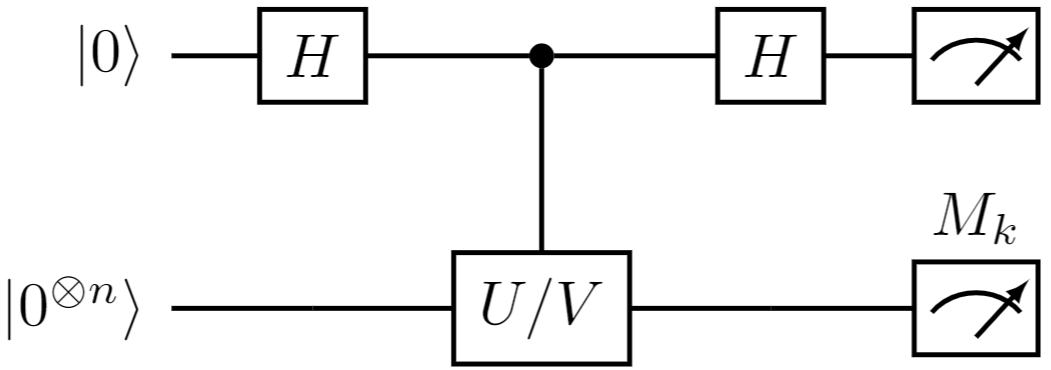}
    \caption{The hadamard test to be used for calculating the real component of some $\bra{0}UM_kV\ket{0}$. The controlled $U/V$ circuit implements $U$ if the control is 0 and $V$ if the control is 1. It is important to note that the controlled $U/V$ circuit only requires on control on a few gates, since only the partition gates differentiate $U$ and $V$ most gates are identical and do not require control. This circuit can be modified to calculate the imaginary component as described in \cite{bravyi2016}}
    \label{fig:hadamard}
\end{figure}

As with all NISQ applications we must inspect how our algorithm will perform on a noisy device. By bounding the variance we find the noise scaling very reasonable.

First, let us replace the non-random inner products, $\bra{\phi_k}U^{k}(\bm \theta, \bm x, \bm \zeta_{i,k})M_{k}U^{k}(\bm \theta,\bm  x, \bm \zeta_{i,k+K})\ket{\phi_k}$, with random variables $X_{i,k}$ which are unbiased estimators of the inner product (that is $\bra{\phi_k}U^{k}(\bm \theta, \bm x, \bm \zeta_{i,k})M_{k}U^{k}(\bm \theta, \bm x, \zeta_{i,k+K})\ket{\phi_k} = \overline{X_{i,k}}$ where the bar now represents the expectation value). These random variables represent an estimation of the inner product with $s$ shots on a quantum computer. 
We are interested in bounding the probability that the difference between the estimate and the average exceeds some $\epsilon$ by $\delta$.
\begin{equation}\label{target error bound}
   P\left(\left|  \sum_{i \in [L]} \lambda_i \prod_{k\in[K]} X_{i,k} - \sum_{i \in [L]} \lambda_i \prod_{k\in[K]} \overline{X_{i,k}}  \right|>\epsilon \right) \leq \delta 
\end{equation}
We can achieve this bound by considering the variance. We will assume the observable and each $\lambda$ are bounded by 1, although we will comment on how this is easily generalised. We find the variance scales with the number of shots:
$$
\sigma^2 \leq  \frac{4K^2L}{s}
$$

by \cite{286787}. Equation \ref{target error bound} is satisfied when we have $s = \frac{4LK^2}{\epsilon^2 \delta}$ shots by Chebyshev's inequality. To generalise this to an observable or variance greater than one, note that the argument of the probability in equation \ref{target error bound} can simply be re-scaled as both of these elements are linear, this in-turn re-scales the variance providing a bound.
\subsection{Training}

Training with a gradient based approach is easy to apply in our model too. The derivative distributes on terms of the sum and can be evaluated by applying the chain rule to the product in each term. Indeed since most parameters appear in only one gate on one qubit on one side of the partition, the chain rule evaluates to 0 on all but 1 element of the product. Evaluating the gradient then takes at most $L$ times the number of evaluations required to evaluate the gradient of one of the smaller circuits. In this case we find that evaluating the gradient for any parameter, $\bm \theta^t$, that exists only in the $k'$th partition is:
\begin{equation}
\begin{split}
\label{grad}
     &\frac{\partial}{\partial \bm \theta^{t}}\bar{f}_{\bm \theta, \bm \zeta, \bm \lambda}(\bm x) = 
     \\&\sum_{i \in [L]} \lambda_i \frac{\partial}{\partial \bm \theta^{t}}\prod_{k\in[K]} \bra{\phi_k}U^{k}(\bm \theta, \bm  x,  \bm \zeta_{i,k})M_{k}
     \\
     &\hspace{4cm}U^{k}(\bm \theta,\bm  x,\bm  \zeta_{i,k+K})\ket{\phi_k}=
     \\&\sum_{i \in [L]} \lambda_i \frac{\partial}{\partial \bm \theta^{t}} \bra{\phi_{k'}}U^{k'}(\bm \theta,\bm  x, \bm \zeta_{i,k'})M_{k'}
     \\
     &\hspace{4cm}U^{k'}(\bm \theta, \bm x, \bm \zeta_{i,k'+K})\ket{\phi_{k'}} \times
     \\&\prod_{k\in[K]\setminus k'} \bra{\phi_k}U^{k}(\bm \theta, \bm x, \bm \zeta_{i,k})M_{k}U^{k}(\bm \theta, \bm x, \bm \zeta_{i,k+K})\ket{\phi_k}
\end{split}
\end{equation}
The same applies for the $\zeta$ parameters. In many instances the gradient can be made easier to compute, since we have often already evaluated the non-derivative expression before looking for the gradient most of the circuit evaluations are already done, with only the derivative expression for a single inner product requiring a new evaluation. Which can be done in the standard manner (e.g. parameter shift rule \cite{crooks2019gradients}).

\subsection{Barren Plateaus}

A well studied problem \cite{mcclean2018barren} with PQCs is the ``barren plateaus" phenomenon, where large parts of the parameter landscape have an exponentially small gradient, effectively crippling optimisation. This is a manageable problem for currently implementable PQCs due to their limited size, but as PQCs become larger (and their gradient decreases) the problem intensifies \cite{mcclean2018barren}. While our model is not immune to barren plateaus we may be able to reduce their effect on our model relative to the size of their effect on the unpartitioned circuit.

Each term of our model is a multiplicative separable function (it can be written as: $f_T(x_1,\ldots ,x_K) = f_1(x_1)\times \ldots \times f_K(x_K)$, where $f_i$ is an inner product and, $x_i$ is the input to the inner product, including the data and weights) we simplify to assuming $x$ is a single parameter, for illustrative purposes. To calculate the gradient we apply the chain rule to the product, for most architectures any particular parameter will only appear on one block of the partition, then one term of the chain rule will be non zero 
$\partial_{x_i}f_T(x_1,\ldots ,x_K) = f_1(x_1) \ldots \left(\partial_{x_i}f_i(x_i)\right)  \ldots f_K(x_K)$. 

The gradient is thus determined by multiplying together the many amplitudes stemming from the subcircuits of the sum with this lone gradient term (equation \ref{grad}). Two aspects may make this overall gradient small: First the gradient may be small as it is a PQC and is prone to barren plateaus, however the individual subcircuits generically have larger gradient than the full unpartitioned circuit as they are smaller \cite{mcclean2018barren} (i.e., the barrenness of the plateaus heavily depends on the number of qubits in the circuit). Second, the multiplication with other terms may cause it to decay to zero as we are dealing with a product of terms which are absolute value below 1, the product then decays exponentially in the number of multiplicative terms to some small number. However in our case we are not directly facing this radically smaller number, we fundamentally have more information about the gradient, knowing the total gradient, but also the terms that are combined to form it. We know the effect that varying any of these subterms has on the gradient of the complete circuit. One possible use of this information is to identify which term is driving the gradient to a small value, and to revert its parameters back to an earlier instance which we have stored in memory, through this method the impact of barren plateaus could be mitigated. A technique similar to \cite{sack2022avoiding} could be developed, to avoid low gradient directions, but utilising the more information present in our case.

There is quite a bit of research on \textit{additive} separable functions, which may transfer to our case \cite{wright2015coordinate}. This could lead to significantly easier training. We plan to develop this method of training in a follow-up work.

\section{Numerics}\label{experiment section}

\setlength\aboverulesep{0pt}
\setlength\belowrulesep{0pt}

\begin{table}
\begin{center}
\begin{tabular}{|m{7em}|m{5em}|m{6em}|}
    \hline
    Experiment&Model \newline Type&Accuracy(A)/\newline MSE \\
    \hline
    MNIST & Neural Net&100\%A \\\cmidrule{2-3}
    handwriting& RPM & 96.4\%A
    \\
    \hline
    Approximating\newline larger PQC & Neural Net RPM&0.424MSE 0.0322MSE\\
    \hline 
\end{tabular}

\end{center}
\caption{A summary of the main numerical findings in this section, we report our model's (RPM) performance on handwriting recognition and on simulating the output of a larger PQC. We also train a neural network as a comparison.}
\end{table}

In the previous sections we laid out a model with considered theoretical underpinning, in this section we will demonstrate that model's basic utility by showing it can learn a simple large problem, the MNIST handwritten digit recognition, by utilising an ansatz much larger than the computer it has simulated access to. We also present an experiment designed to test if an adequate approximation of a random circuit output can be made with much fewer terms, we then apply our model on the same random circuits output to test its performance on synthetic data.
    
\subsection{A Large Problem: Handwriting}

Reading handwritten numbers is one of the most basic tasks in undergraduate machine learning courses. The MNIST \cite{deng2012mnist} data set presents a relatively simple task, identify which digit is written in an $28\times 28$ pixel image, but even this simple task is difficult for current generation quantum machines due to its high dimensionality, with quantum attempts only succeeding recently through careful encoding of the problem (e.g. in \cite{li2021quantum}). Often dimensionality reduction techniques such as principal component analysis are applied \cite{2021data} but for a simple problem like MNIST handwriting this reduces the learning problem to a triviality. Here we preserve the learning problem by downsampling the image to just 64 pixels, which is importantly still human readable. Here we will show that even simple cases of our model perform adequately and by increasing $L$ (the number of terms of our model) we increase that performance. 

\begin{figure}
    \centering
    \includegraphics[width=5cm]{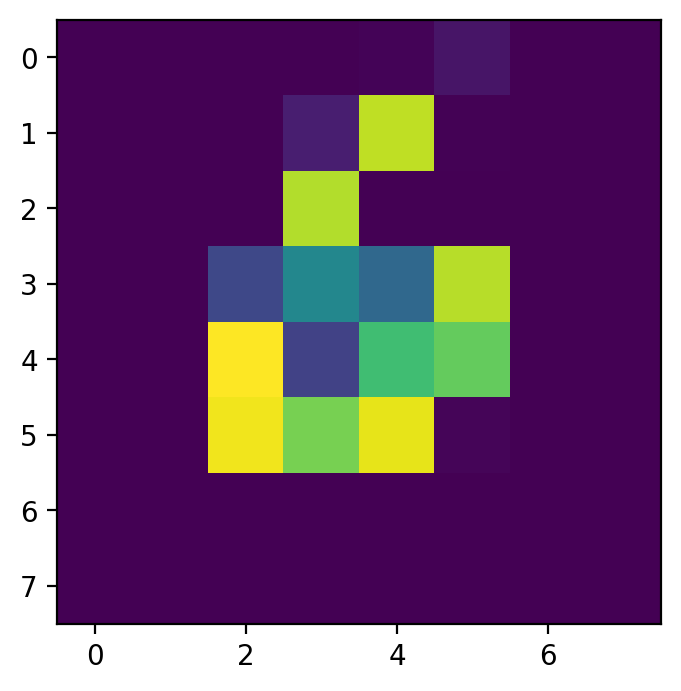}
    \caption{An example datum of the handwriting task, the number 6. The picture is then cut into 8 elements (as given by its columns) with each element as input to a different PQC. This resolution data was chosen so as to be fine enough to be human readable.}
    \label{fig:clearly 6}
\end{figure}

For purposes of comparison we reduce the problem to differentiating 3 and 6, as in \cite{farhi2018classification}. Our model is based on an 8 block partitioning of the 64 qubit, depth 3 hardware efficient ansatz (of the same form as in \cite{jerbi2021variational}) into 8 qubit blocks, a model which would normally be far outside of our computational power. An unseen validation set is evaluated at every step of training and the results are shown in figure \ref{fig:val loss}. The final training loss (MSE), testing loss (MSE) are shown in the table \ref{tab:handwriting}, we also apply a step function to the output (to convert its real valued output into a binary label) and list its accuracy.
\begin{table}
    \begin{center} 
        \begin{tabular}{c|c|c|c}  
            L & loss$_{training}$ & loss$_{testing}$ & acc\% \\
            \hline
            1  & 0.0521 & 0.0499    & 92.8\\
            3  & 0.0488 & 0.0465    & 92.8\\
            5  & 0.0479 & 0.0454    & 94.7\\
            10 & 0.0432 & 0.0422    & 96.2\\
            20 & 0.0359 & 0.0341    & 96.4\\
            \hline \\
            Neural Net &0.0025 & 0.0031 & 100 \\
        \end{tabular}
    \end{center}
    
    \caption{Final loss on training/validation sets and ultimate accuracy of the reduced partition model on the handwriting recognition task. Different values of the hyperparameter $L$ are listed. A neural network is also trained on the task and found to perform very well.}
    \label{tab:handwriting}
\end{table}

Data augmentation (skews and rotations) were used to generate more data for the model. Without this augmentation high $L$ terms began to overfit, increasing the training performance while decreasing the validation performance. With data augmentation we can see that our model is behaving well, even in the 1 term case we find that it selects a good arrangement of weights, although with relatively few additional terms the performance increases, for contrast to run this model using the complete circuit partitioning scheme would require the evaluation of over 46000 subcircuits. A neural network with a convolution layer and a single dense 128 neuron hidden layer is provided for comparison. We must consider that our results are on MNIST handwriting, which is known to have many problems and cannot be used to claim that our model excels on all similarly large tasks \cite{tweet}. 

    \begin{figure}
    \centering
    \includegraphics[width=8cm]{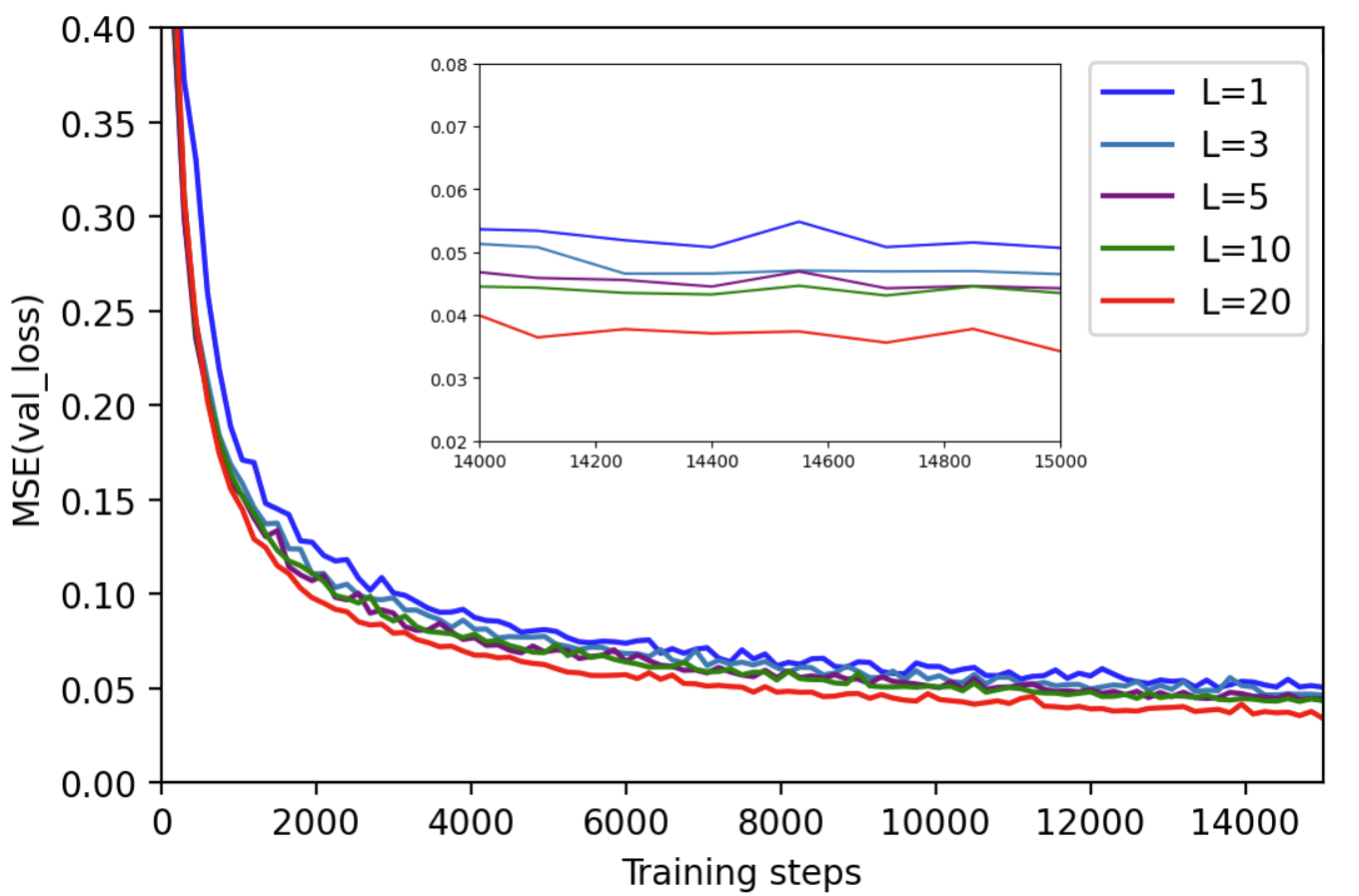}
    \caption{The loss on an unseen data set evaluated alongside the training of a reduced partition model to recognise handwritten digits. Increasing the number of terms has a positive effect on the ability of the model.}
    \label{fig:val loss}
    \end{figure}

\subsection{Tests on Synthetic Data}

This work is built around the assumption that many terms in the partitioned equation for a given circuit are redundant, and a good approximation of the complete circuit can be made by our model. In this section we test this assumption with our first experiment, and then test our complete model on learning a synthetic data set in the second and third experiments.

We take a width 10, depth 3 instance of the hardware efficient ansatz with random weights. Using this circuit we generate a synthetic data set by recording its output on $10000$ random inputs, we normalise these outputs to to a mean squared average of 1. We then instantiate a modified version of our model corresponding to a partitioning of the full circuit into 2 blocks of width 5. The model is modified from the general model we have described above by fixing $\bm \theta$ (the weights present from the unpartitioned PQC) and only training $\bm \zeta$ and $\bm \lambda$ (the weights we introduced when creating the model). This modification allows us to examine directly our claim that introduction of the free parameters, $\bm \zeta$ and $\bm \lambda$, is sufficient to approximate the output of the full PQC without evaluating the many subcircuits that would be required in theorem \ref{partitioned model}. After this experiment we free $\theta$ (apply the full model) and examine the increased performance this gives us.

\begin{figure}
    \centering
    \includegraphics[width=8cm]{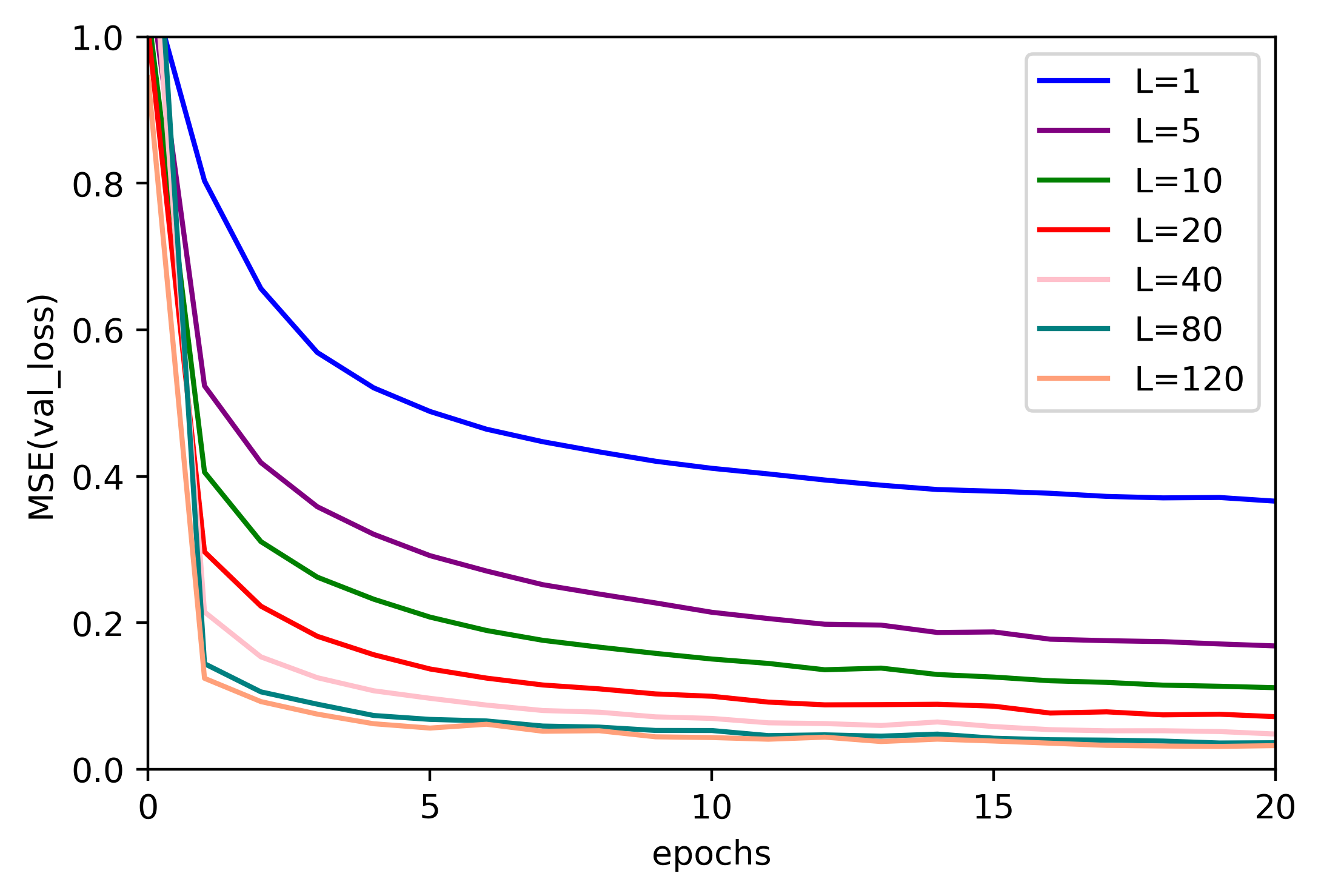}
    \caption{The performance of our model on replicating the output of a un-partitioned VQC when $\theta$ is fixed to that of the un-partitioned model, and only the parameters meant to replicate the partitioning process are trained. The amount of terms included is varied according to the Legend in the top right.}
    \label{fig:synth data theta only}
\end{figure}

The results of our experiment are shown in figure \ref{fig:synth data theta only}. The benefits of increasing $L$ are more apparent than in the digit recognition experiment, we can see better approximations being made at higher $L$. For some applications more accuracy might be required, it seems increasing $L$ further will continue improve this accuracy. Noteably all $L$ considered are orders of magnitude below the amount of terms or circuits needed to apply the existing partitioning schemes. The final mean squared error for unseen data averaged over 5 random data sets is presented in table \ref{tab: no lambda RPM}.
\begin{table}
    \begin{center}
        \begin{tabular}{c|c}
            L & Final validation MSE \\
            \hline
            1  & 0.366     \\
            5  & 0.168     \\
            10 & 0.111     \\
            20 & 0.0717    \\
            40 & 0.0481    \\
            80 & 0.0362    \\
            120& 0.0322    \\
            \hline \\
            Neural Net & 0.424 \\
        \end{tabular}
    \end{center}
    \caption{Results of experiment comparing the validity of our assumptions. We fix the value's of $\theta$ and set the 
    reduced partition model to learn the output of a larger PQC, to simulate the larger PQC exactly using \cite{bravyi2016} would require $L=16,777,216$, the model is only able to select which parameters to put on the gates resulting from the partition. Different values of the hyperparameter $L$ are listed for comparison. A neural network is also trained and performs poorly. Results are averaged over 5 runs.}
    \label{tab: no lambda RPM}
\end{table}
    
Where we have included a neural network with a single dense hidden layer of 256 neurons for comparison purposes, other neural network architectures (1 and 2 hidden layers were tried, with 64 and 256 neurons per layer for each) were tried without meaningful improvement, although it is possible that with thorough tuning these architectures or others could be made to perform strongly.

The previous results are sufficient to show that the training of just the parameters $\bm \zeta$ and $\bm \lambda$ can lead to models with substantially fewer terms, $L$, while still sufficiently approximating the full circuit in this instance. This approximation was achieved with just the training of $\bm \zeta$ and $\bm \lambda$, while fixing the $\bm \theta$ to those that were used to generate the data. However it is not clear, a-priori, that the reduced model should use the same $\bm \theta$ parameters to best mimic the full model. We now allow $\bm \theta$ to deviate from that of the generating PQC, the resulting mean squared error for unseen data after 20 epochs is presented in table \ref{tab:free RPM}.

\begin{table}
    \begin{center}
        \begin{tabular}{c|c}
            L & Final validation MSE \\
            \hline
            1  & 0.176     \\
            5  & 0.112     \\
            10 & 0.0855    \\
            20 & 0.0600    \\
            40 & 0.0434    \\
            80 & 0.0334    \\
            120& 0.0289    \\
            \hline \\
            Neural Net & 0.424 \\
        \end{tabular}
    \end{center}
    \caption{Mean squared error of the reduced partition model on learning the output of a larger PQC, unlike table \ref{tab:handwriting} all parameters are now free and we can directly test the RPM's capabilities on this task. Different values of the hyperparameter $L$ are listed for comparison. A neural network is also trained and performs very poorly. Results averaged over 5 runs.}
    \label{tab:free RPM}
\end{table}
    
This improvement in performance is unsurprising as the unrestricted model includes the hypothesis of the model without training $\bm \theta$, however it was not clear before the experiment that the model would be able to find this higher performance, as the introduction of more parameters may have created too many local optima for efficient optimisation. On the other hand we may have expected a larger increase in performance,  as $\bm \theta$ makes up the majority of parameters, we should expect releasing $\bm \theta$ to correspond to a big increase in performance. The lack of this increase could be taken as weak evidence that our approximation (that a smaller set $L$ can approximate the output of the whole circuit) to be relatively accurate in this case, even without retraining $\bm \theta$.

Finally we use the synthetic data set as a training set for our model, with random initialisation of weights. This third experiment allows us to test our models performance on a task which a classical algorithm (the neural network) performs poorly on, without prior knowledge of good parameters.

\begin{center}
    \begin{tabular}{c|c}
        L & Final validation MSE \\
        \hline
        1  & 0.183     \\
        5  & 0.113     \\
        10 & 0.0944    \\
        20 & 0.0703    \\
        40 & 0.0540    \\
        80 & 0.0357    \\
        120& 0.0362    \\
        \hline \\
        Neural Net & 0.424 \\
    \end{tabular}
\end{center}
    
These performances are strong and comparable to the previous two experiments, where $\bm \theta$ was given, showing that our model performs well on this task, much better than the neural network we compare it to. This final experiment is an excellent demonstration of our model as it would be deployed, and demonstrates that it can learn a non-trivial task where a higher number of qubits would naively be required.

\subsection{Open Source Code}
Open source code is provided at \hyperlink{https://github.com/simon-marshall/High_Dimensional_Quantum_Machine_Learning_Public}{https://github.com/simon-marshall}
in an ipython notebook format with an MIT licence to allow for easy reuse. 2 versions of the code are provided, one to replicate either experiment.

\section{Conclusion and future work}
In this work we applied previously developed circuit cutting techniques to parameterised quantum circuits. While it is obvious that this naive approach used too many circuit evaluations to be computationally practical we noted there may exist a smaller set of circuits which would sufficiently approximate the original circuit, although we speculate that finding it would itself be computationally intractable even if it did exist. Instead we proposed a new model based on the relaxation of fixed gates into parameterised gates, such that all circuits were identical up to the weights of these newly parameterised gates. We showed our models hypothesis class contained the relevant unparameterised hypothesis class, that its generalisation error was well behaved and then went on to test it experimentally. The first experiment showed the model was capable of tackling large problem sizes (handwriting). We also tested the ability of a parameterised subset of circuits of the partition to approximate the full unpartitioned output of a random circuit and found a very satisfying approximation, although a larger amount of terms was needed than with the handwriting task, suggesting a link between the problem and the number of terms needed to achieve a given accuracy.

Further work is needed in establishing how many terms ($L$) might be required for any given task, and what factors influence this requirement. Future work could also focus around the application of this model, testing it out on larger cutting edge problems, or on achieving higher accuracy. Improvements to the model could come from a development of a robust training procedure to avoid barren plateaus (Section \ref{Implementation}) or from integrating our work with some of the excellent work already done on improving divide and conquer schemes (Section \ref{Rel Work}). Our work has opened the door for experimentation with much larger ``partially quantum" models both implicitly as we have done here, but potentially explicitly, integrating more classical resources into a quantum machine learning setting.

\section{Acknowledgements}

The authors would like to thank Matthias Caro for his insights, particularly on the link between different complexity measures and Elies Gil-Fuster. SCM thanks Radoica Draškić and Yash Patel for their useful discussion. The authors thank an anonymous reviewer, Andrea Skolik, Elies Gil-Fuster and Charles Moussa for helpful comments. VD and SCM acknowledge the support by the project NEASQC funded from the European Union’s Horizon 2020 research and innovation programme (grant agreement No 951821). VD and SCM also acknowledge partial funding by an unrestricted gift from Google Quantum AI. VD and CG were supported by the Dutch Research Council (NWO/OCW), as part of the Quantum Software Consortium programme (project number 024.003.037). This work was also supported by the Dutch National Growth Fund (NGF),as part of the Quantum Delta NL programme.

\bibliographystyle{plainnat} 
\bibliography{biblo}

\appendix
\section{Example of increased complexity after the removal of a 2-qubit gate}
\label{appendix}
In this appendix we will see that in some highly manufactured cases removing a two qubit gate can lower expressivity.

Consider the following circuit:

\begin{figure}[H]
    \centering
    \includegraphics[width=6cm]{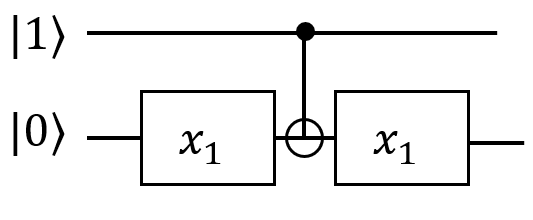}
\end{figure}

Where the encoding gates are Ry. With the CNOT in place the state vector at the end will always be $\ket{11}$, regardless of the input vector. Removing the CNOT entirely produces the following circuit:

\begin{figure}[H]
    \centering
    \includegraphics[width=6cm]{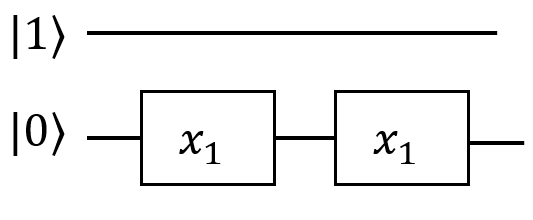}
\end{figure}

With the CNOT removed the final state is $\cos(x_1)\ket{10}+\sin(x_1)\ket{11}$, which depends on the input, $x_1$. In this way it is clear how a circuit making use of this (rather pointless) gate would increase in expressivety when a CNOT is removed. No one would realistically propose this circuit, but its existence prevents blanket statements about the effect of removing two qubit gates on complexity and therefore generalisation.

\section{Proofs of theorems}
Throughout this paper we have stated results without proof. Here we provide proofs of each of our theorems that clear up technical details.

\subsection{Proof of Theorem \ref{partitioned model}}

\begin{theorem*}[Partitioned model]
For every function $f_{\bm \theta} \in \mathcal{F}_{\mathbb{U}, M}$ and qubit partition $\{B_k\}_{k \in [K]}$ with observable $M = \bigotimes_{k \in [K]}M_k$, there exists a set of coefficients $\{c_i\}_{i \in [T]}$ and unitaries $\tilde{\mathbb{U}} = \{U^{i, k}(\bm \theta, \bm  x), U'^{i, k}(\bm \theta, \bm x)\}_{i \in [T], \text{ } k\in [K]}$ (where each $U^{i,k}$ and $U'^{i,k}$ acts on $n_k$ qubits) which can be combined in a function:
\begin{equation}
     \tilde{f}_{\bm \theta}(\bm x) = \sum_{i = 1}^{T} c_i \prod_{k=1}^K \bra{0}U'^{i, k\dagger}(\bm \theta, \bm x)M_{k}U^{i, k}(\bm \theta, \bm x)\ket{0},
\end{equation}
such that $ f_{\bm \theta}(\bm x) = \tilde{f}_{\bm \theta, M}(\bm x)$ for every $\bm \theta, \bm x$. 
For arbitrary gates the number of terms $T$ grows as $16^r$, where r is the number of gates across the partition, but for cut gates with known Schmidt number $T$ is the product of the Schmidt number squared of each cut gate.
\end{theorem*}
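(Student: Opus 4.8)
The plan is to formalise the circuit-cutting procedure of \cite{bravyi2016} sketched around Equation~(\ref{Basic cutting identity}), and then take the inner product with the fixed input state and the product observable. First I would fix a circuit $U(\bm\theta,\bm x)\in\mathbb{U}$ and write it as an ordered product of elementary gates $U = G_L\cdots G_1$, each padded with identities so that it acts on all $N$ qubits. Call $G_\ell$ a \emph{crossing gate} if its support meets more than one block $B_k$; by hypothesis there are at most $r$ such gates, each a two-qubit entangler straddling the partition, while every other gate acts inside a single block. For each crossing gate $G_{\ell_j}$ I invoke the operator-Schmidt decomposition (an SVD of $G_{\ell_j}$ reshaped across the bipartition of its two qubits, see \cite{balakrishnan2011operator}): $G_{\ell_j} = \sum_{a=1}^{s_j}\alpha^{(j)}_a\, v^{(j)}_a\otimes w^{(j)}_a$, with $v^{(j)}_a, w^{(j)}_a$ single-qubit operators in the respective blocks and $s_j$ the Schmidt number of $G_{\ell_j}$ (so $s_j\le 4$ in general, $s_j=2$ for $\mathrm{CNOT}$ and controlled-$Z$).

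Second, I substitute all $r$ decompositions into $U$ and expand multilinearly: $U = \sum_{\bm a}\bigl(\prod_j\alpha^{(j)}_{a_j}\bigr)\widetilde{U}_{\bm a}$ over multi-indices $\bm a=(a_1,\dots,a_r)$, where now \emph{every} factor of $\widetilde{U}_{\bm a}$ is supported inside a single block. Applying the mixed-product rule $(\bigotimes_k A_k)(\bigotimes_k B_k)=\bigotimes_k(A_k B_k)$ repeatedly, each $\widetilde{U}_{\bm a}$ collapses to a block-tensor product $\bigotimes_{k=1}^K U^{\bm a,k}(\bm\theta,\bm x)$ with $U^{\bm a,k}$ acting on the $n_k$ qubits of $B_k$. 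Third, I insert this into $f_{\bm\theta}(\bm x)=\bra{0}U^\dagger M U\ket{0}$, use $M=\bigotimes_k M_k$ and $\ket{0}=\bigotimes_k\ket{0}^{\otimes n_k}$, and apply the mixed-product rule once more so that the expectation value factorises over blocks. This produces a double sum over $\bm a,\bm a'$; re-indexing the pairs $(\bm a,\bm a')$ by a single $i\in[T]$, setting $c_i:=\overline{\prod_j\alpha^{(j)}_{a'_j}}\,\prod_j\alpha^{(j)}_{a_j}$, $U^{i,k}:=U^{\bm a,k}$ and $U'^{i,k}:=U^{\bm a',k}$, recovers Equation~(\ref{Bravyi cut 1}) with equality for all $\bm\theta,\bm x$ by construction. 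The count is then immediate: there are $\prod_j s_j$ values of $\bm a$, hence $T=(\prod_j s_j)^2=\prod_j s_j^2$; bounding each $s_j\le 4$ gives $T\le 16^r$.

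I expect the only genuinely delicate point to be the ``collapse to a block-tensor product'' step: one must check that padding each physical gate with identities and then repeatedly applying the mixed-product rule is legitimate no matter how gates from different blocks are interleaved in time, and that every gate straddling the partition is genuinely two-qubit, so that a bipartite operator-Schmidt decomposition suffices (a gate meeting three or more blocks would require an iterated, multipartite version of the same argument). The remaining steps --- multilinear expansion, factorisation over blocks, re-indexing, and term counting --- are routine bookkeeping.
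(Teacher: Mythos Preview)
Your proposal is correct and follows essentially the same route as the paper's own proof: invoke the operator-Schmidt decomposition (from \cite{bravyi2016}, with ranks bounded as in \cite{balakrishnan2011operator}) of each crossing two-qubit gate, expand multilinearly so that $U$ becomes a sum of block-tensor products, then pair the sums coming from $U$ and $U^\dagger$ and factor the resulting expectation over blocks using the product form of $\ket{0}$ and $M$, arriving at the count $T=\prod_j s_j^2\le 16^r$. The paper's appendix gives only a terse sketch of precisely this argument, so your version is simply a more explicit rendering of the same proof, with the bookkeeping (mixed-product rule, re-indexing of $(\bm a,\bm a')$) spelled out.
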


\begin{proof}

For any given circuit, $U$, \cite{bravyi2016} provides a decomposition of the form:
$$
U = \sum_{i = 1}^{T} a_i \prod_{k=1}^KU^{i, k}
$$
by writing two qubit gates in the schmidt decomposition. Two qubit gates have schmidt rank of between 2 and 4 \cite{balakrishnan2011operator}, thus any two qubit can be expressed as a sum of 4 tensor product single qubit gates. To express an expectation value in this form requires decomposition on both $U$ and $U^\dagger$, using the linearity of the the expectation value we arrive at $16^r$ inner products for $r$ two qubit gates. Applying this scheme to every element of $\tilde{\mathbb{U}}$ gives us the final statement.
\end{proof}

\subsection{Proof of Lemma \ref{lemma:reduced}}

\begin{lemma*}
For every $\tilde{f}_{\bm \theta} \in \mathcal{F}^L_{\mathbb{U}, M}$, there exists a set of unitaries $\{U^{k}(\bm \theta, \bm x, \bm \zeta)\}_{k\in [K]}$ and parameters $\bm \lambda, \bm \zeta$ defining a function:
\begin{align*}
     &\bar{f}_{\bm \theta, \bm \zeta, \bm \lambda}(\bm x) = \\&\sum_{i \in [L]} {\lambda_i} \prod_{k\in[K]} \bra{0}U^{k \dagger}(\bm \theta,\bm  x, \bm{\zeta}_{i,k})M_{k}U^{k}(\bm \theta,\bm  x, \bm{\zeta}_{i,k+K})\ket{0},
\end{align*}
\noindent
such that $\tilde{f}_{\bm \theta}(\bm x) = \bar{f}_{\bm \theta, \bm \zeta, \bm \lambda}(\bm x)$ for every $\bm \theta,\bm  x$ and for every observable that can be written as tensor product on the elements of the partition $M = \bigotimes_k M_k$. 
\end{lemma*}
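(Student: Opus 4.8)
The plan is to promote the fixed single-qubit gate sitting at each ``cut site'' of every subcircuit in $\tilde{\mathbb U}$ to one generic parameterised single-qubit gate, and then recover each term of the given subset partition model by a single choice of the new parameters $\bm\zeta$, leaving $\bm\lambda$ untouched. First I would write the given $\tilde f_{\bm\theta}\in\mathcal F^L_{\mathbb U,M}$ explicitly as a subset partition model $\tilde f_{\bm\theta}(\bm x)=\sum_{i\in[L]}\lambda_i\prod_{k\in[K]}\bra{0}U'^{i,k\dagger}(\bm\theta,\bm x)M_kU^{i,k}(\bm\theta,\bm x)\ket{0}$, re-indexing the chosen subset as $[L]$. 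The structural fact I would rely on, inherited from the construction behind Theorem~\ref{partitioned model}, is that for each fixed block $k$ all of the unitaries $U^{i,k}$ and $U'^{i,k}$ are the \emph{same} block-$k$ subcircuit — carrying the original encoding gates (depending on $\bm x$) and weight gates (depending on $\bm\theta$) verbatim — except at the finitely many positions where a gate crossing the partition was cut; at each such position sits one of the at most four single-qubit Schmidt factors of that cut gate, and this choice is the only thing that differs between distinct $U^{i,k}$, between distinct $U'^{i,k}$, and between the primed and unprimed families.

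Next I would define $U^{k}(\bm\theta,\bm x,\bm\zeta)$ to be exactly that block-$k$ subcircuit with the gate at each of its cut sites replaced by a fully general parameterised single-qubit unitary — for concreteness an Euler-angle gate $g(\alpha,\beta,\gamma,\delta)=e^{i\alpha}R_z(\beta)R_y(\gamma)R_z(\delta)$, which sweeps out all of $U(2)$ as its four real arguments vary — so that $\bm\zeta$ collects the angle tuples of all cut sites in block $k$. Since $U(2)$ is covered, for each term $i$ I can pick $\bm\zeta_{i,k}$ so that $U^{k}(\bm\theta,\bm x,\bm\zeta_{i,k})=U'^{i,k}(\bm\theta,\bm x)$ and $\bm\zeta_{i,k+K}$ so that $U^{k}(\bm\theta,\bm x,\bm\zeta_{i,k+K})=U^{i,k}(\bm\theta,\bm x)$; these equalities hold for \emph{all} $\bm\theta,\bm x$ at once, because the only adjustments needed are at the cut sites and those are independent of $\bm\theta$ and $\bm x$. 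Keeping $\bm\lambda$ as the original coefficients, each term of $\bar f_{\bm\theta,\bm\zeta,\bm\lambda}$ then coincides identically (in $\bm\theta$ and $\bm x$) with the corresponding term of $\tilde f_{\bm\theta}$, and summing over $i\in[L]$ yields $\tilde f_{\bm\theta}(\bm x)=\bar f_{\bm\theta,\bm\zeta,\bm\lambda}(\bm x)$. No property of $M_k$ beyond its acting on block $k$ is used, so the identity holds for every $M=\bigotimes_k M_k$.

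The only step with real content is the structural claim above, that the members of $\tilde{\mathbb U}$ differ solely by localized, $(\bm\theta,\bm x)$-independent single-qubit insertions at the cut sites; once it is established, building a single parameterised family that subsumes them all is immediate and there is no residual scalar to track, since the Euler-angle gate already carries its own phase (alternatively one may work in $SU(2)$ and absorb the leftover phases into the complex coefficients $\lambda_i$, which the model is anyway allowed to have). I would therefore devote most of the write-up to reading off, from the Schmidt-decomposition cutting procedure of Theorem~\ref{partitioned model}, precisely which gates are altered and confirming that everything else is carried through unchanged, and then dispatch the remainder as above. As a purely cosmetic refinement, if one prefers $\bm\zeta$ to be a single scalar per cut gate — matching the running $\mathrm{Controlled}\text{-}Z$ example with $Z^\zeta$ — it suffices to join the finitely many target single-qubit unitaries by any smooth curve in $U(2)$, whose existence follows from path-connectedness; only the prescribed values along the curve enter the identity.
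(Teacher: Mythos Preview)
Your proposal is correct and follows essentially the same approach as the paper: observe that the subcircuits produced by the cutting scheme differ only by the single-qubit unitaries inserted at the cut sites, then parameterise those sites generically and recover each term by an appropriate choice of $\bm\zeta$. Your write-up is considerably more explicit than the paper's (which dispatches the lemma in two sentences), and your discussion of the Euler-angle cover of $U(2)$, the global-phase bookkeeping, and the path-connectedness remark for a scalar $\bm\zeta$ per cut are all sound elaborations of the same idea rather than a different route.
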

\begin{proof}

This is a simple extension of the last theorem. For fixed $i,k$ we have to find $U^{k}(\bm \theta,\bm  x, \bm{\zeta}_{i,k})$ such that
$$
U^{k}(\bm \theta,\bm  x, \bm{\zeta}_{i,k}) = U^{i, k}(\bm \theta,\bm  x) 
$$

for some $\zeta_{i,k}$. We know that the cutting scheme in \cite{bravyi2016} replaces the site of removed two qubit gates with single qubit unitaries, parameterising the difference between the unitaries proves the result.
    
\end{proof}

\subsection{Proof of Theorem \ref{thm:inclusion}}

\begin{theorem*}
For any PQC hypothesis class $\mathcal{F}_{\mathbb{U}, M}$, the $L$-subset partition hypothesis class $\mathcal{F}^L_{\mathbb{U}, M}$ is included in the hypothesis class of the reduced $L$-subset partition model $\overline{\mathcal{F}^L_{\mathbb{U}, M}}$, i.e., 
\begin{equation}
    \mathcal{F}^L_{\mathbb{U}, M} \subset \overline{\mathcal{F}^L_{\mathbb{U}, M}}
\end{equation}
\end{theorem*}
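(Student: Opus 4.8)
The plan is to prove the inclusion elementwise: take an arbitrary $f \in \mathcal{F}^L_{\mathbb{U}, M}$ and exhibit parameters $\bm\theta, \bm\zeta, \bm\lambda$ for which $\bar f_{\bm\theta,\bm\zeta,\bm\lambda} = f$, so that $f \in \overline{\mathcal{F}^L_{\mathbb{U},M}}$. By definition of the $L$-subset partition model, such an $f$ is of the form $\tilde f^{I}_{\bm\theta}(\bm x) = \sum_{i\in I}\lambda_i \prod_{k=1}^K \bra{0}U'^{i,k\dagger}(\bm\theta,\bm x)M_k U^{i,k}(\bm\theta,\bm x)\ket{0}$ for some weight vector $\bm\theta$, some size-$L$ index set $I$ (the optimal one, although the argument will work for any such $I$), and some coefficients $\{\lambda_i\}_{i\in I}$. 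I would first relabel the elements of $I$ as $1,\dots,L$ so that the sum runs over $[L]$, matching the shape of equation \ref{RPM sum}.

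The core step is to realise each factor $\bra{0}U'^{i,k\dagger}(\bm\theta,\bm x)M_k U^{i,k}(\bm\theta,\bm x)\ket{0}$ of the subset model as a factor $\bra{0}U^{k\dagger}(\bm\theta,\bm x,\zeta_{i,k})M_k U^{k}(\bm\theta,\bm x,\zeta_{i,k+K})\ket{0}$ of the reduced model. Here I invoke Lemma \ref{lemma:reduced}: the subset model is itself a $\tilde f_{\bm\theta}\in\mathcal{F}^L_{\mathbb{U},M}$ (once we absorb the $\lambda_i$ and restrict to $L$ terms), so the lemma guarantees the existence of a single family of block unitaries $\{U^k(\bm\theta,\bm x,\bm\zeta)\}_{k\in[K]}$ and a choice of $\bm\zeta$ such that $U^k(\bm\theta,\bm x,\zeta_{i,k}) = U'^{i,k}(\bm\theta,\bm x)$ and $U^k(\bm\theta,\bm x,\zeta_{i,k+K}) = U^{i,k}(\bm\theta,\bm x)$ for all $i\in[L]$, $k\in[K]$ — because the "primed" and "unprimed" block unitaries differ from a common reference only on the partition gates, which have been promoted to $\bm\zeta$-parameterised gates interpolating all the single-qubit replacements produced by the cutting scheme of \cite{bravyi2016}. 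Substituting these identities factorwise, and keeping the same $\bm\lambda$, yields $\bar f_{\bm\theta,\bm\zeta,\bm\lambda}(\bm x) = \tilde f^{I}_{\bm\theta}(\bm x)$ for every $\bm x$, which is exactly membership in $\overline{\mathcal{F}^L_{\mathbb{U},M}}$.

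I would close by noting two bookkeeping points: first, an optimal size-$L$ subset $I$ exists because there are only finitely many such subsets for a fixed partitioned model, so the statement is well posed; second, the $\bm\lambda$ and $\bm\zeta$ of the reduced model are declared entirely free, so there is no constraint obstructing the assignment above. The main obstacle — and the only place any real content enters — is justifying that one \emph{single} parameterised block unitary $U^k$ can simultaneously express every $U^{i,k}$ and $U'^{i,k}$ appearing across all $L$ terms just by varying its $\bm\zeta$ argument; this is precisely the assertion of Lemma \ref{lemma:reduced}, so modulo that lemma the theorem is essentially immediate. Everything else is relabelling and substitution.
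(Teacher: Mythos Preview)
Your proposal is correct and matches the paper's approach exactly: both proofs reduce the inclusion to a direct invocation of Lemma~\ref{lemma:reduced}, which supplies the $\bm\zeta$ (and $\bm\lambda$) realising any $\tilde f_{\bm\theta}\in\mathcal{F}^L_{\mathbb{U},M}$ as some $\bar f_{\bm\theta,\bm\zeta,\bm\lambda}\in\overline{\mathcal{F}^L_{\mathbb{U},M}}$. Your additional bookkeeping (relabelling $I$ as $[L]$, noting finiteness of subsets, freeness of parameters) is sound but not needed beyond what the lemma already provides.
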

\begin{proof}

By lemma \ref{lemma:reduced} we know that for all $\tilde{f}_{\bm \theta} \in \mathcal{F}^L_{\mathbb{U}, M}$ there exists $\bm \zeta$ selecting a function $\bar{f}_{\bm \theta, \bm \zeta, \bm \lambda} \in \overline{\mathcal{F}^L_{\mathbb{U}, M}}$ such that $\bar{f}_{\bm \theta, \bm \zeta, \bm \lambda}(x)=\tilde{f}_{\bm \theta}(x) \forall x$.
    
\end{proof}

\end{document}